\title{\LARGE \bf Signalling and Control  in  Nonlinear Stochastic  Systems: An Information State Approach  with   Applications }
\DeclareMathAlphabet{\pazocal}{OMS}{zplm}{m}{n}
\newcommand{\mb}{\mathbb}
\let\bbordermatrix\bordermatrix
\patchcmd{\bbordermatrix}{8.75}{4.75}{}{}
\patchcmd{\bbordermatrix}{\left(}{\left[}{}{}
\patchcmd{\bbordermatrix}{\right)}{\right]}{}{}
\newcommand{\sr}{\stackrel}
\newcommand{\rar}{\rightarrow}
\newcommand{\tri}{\sr{\triangle}{=}}
\newcommand{\be}{\begin{equation}}
\newcommand{\ee}{\end{equation}}
\newcommand{\bea}{\begin{eqnarray}}
\newcommand{\eea}{\end{eqnarray}}
\newcommand{\bes}{\begin{eqnarray*}}
\newcommand{\ees}{\end{eqnarray*}}
\newcommand{\bce}{\begin{center}}
\newcommand{\ece}{\end{center}}
\newcommand{\beae}{\begin{IEEEeqnarray}{rCl}}
\newcommand{\eeae}{\end{IEEEeqnarray}}
\def\VR{\kern-\arraycolsep\strut\vrule &\kern-\arraycolsep}
\def\vr{\kern-\arraycolsep & \kern-\arraycolsep}
\newcommand{\ben}{\begin{enumerate}}
\newcommand{\een}{\end{enumerate}}
\newcommand{\hso}{\hspace{.1in}}
\newcommand{\hst}{\hspace{.2in}}
\newtheorem{theorem}{Theorem}[section]
\newtheorem{remark}{Remark}[section]
\newtheorem{assumptions}{Assumptions}[section]
\newtheorem{definition}{Definition}[section]
\newtheorem{lemma}{Lemma}[section]
\author{Charalambos D. Charalambous$^{1}$ and Stelios Louka$^{2}$
\thanks{$^{1}$C. D. Charalambous (Bambos) is  with the Faculty of  Department of Electrical and Computer Engineering, University of Cyprus, Nicosia, Cyprus
        {\tt\small chadcha@ucy.ac.cy}}%
\thanks{$^{2}$S. Louka is a Ph.D. student  with the Department of Electrical and Computer Engineering, University of Cyprus, Nicosia, Cyprus
        {\tt\small slouka01@ucy.ac.cy}}%
        }
\begin{document}

\maketitle
\thispagestyle{empty}
\pagestyle{empty}

\begin{abstract}
We consider optimal signalling and control of  discrete-time nonlinear partially observable stochastic systems in state space form.
In the first part of the paper,  we characterize the operational  {\it control-coding capacity}, $C_{FB}$ in bits/second,  by an  information theoretic optimization problem  of encoding signals or messages into randomized  controller-encoder strategies,  and reproducing the messages at the output of the system using a decoder or estimator with arbitrary small  asymptotic error probability.  Our analysis of  $C_{FB}$ is based on realizations of   randomized strategies (controller-encoders), in terms of information states of   nonlinear filtering theory, and either uniform or arbitrary distributed  random variables (RVs). In the second part of the paper,   we analyze  the linear-quadratic  Gaussian partially observable stochastic system (LQG-POSS). We show that  simultaneous signalling and control  leads  to  randomized strategies  described by finite-dimensional sufficient statistics, that involve two Kalman-filters,   and  consist of  control, estimation  and signalling strategies. We apply decentralized optimization techniques to prove a separation principle, and  to derive  the optimal  control part of randomized strategies explicitly in terms of a control matrix  difference Riccati equation (DRE). 
\end{abstract}

%

\section{Introduction, Main Results, Literature}
\label{sect:problem}
In this   paper we analyze the  problem of simultaneous signalling  and control for   partially observable discrete-time nonlinear stochastic control systems,  with respect to an average pay-off. 
 Our protocol for simultaneous signalling and control is a slight generalization of Shannon's \cite{shannon1948} operation definition 
for  reliable communication of  signals or messages over  noisy channels, subject to a total transmitter power constraint of $\kappa \in [0,\infty)$ units of power, called  {\it coding rate and  coding capacity}, in the theory of information transmission  \cite{gallager1968,ihara1993,cover-thomas2006}. 
By  analogy to Shannon  \cite{shannon1948}, we quantify  an achievable {\it control-coding (CC) rate}  $R$ in bits/second,  by  the number of messages that can be encoded into  randomized control strategies (consisting of  controller,  encoder and estimation strategies) subject to the average pay-off constraint,   and reproduced by the   decoder or estimator at the output of the stochastic control system,   with arbitrary small asymptotic error probability. We call the  maximum achievable CC  rate  the    {\it control-coding  capacity}, $C_{FB}(\kappa)$  bits/second, of the stochastic control system. Embedded  in the operational definition of an  achievable CC rate is the dual role of randomized control strategies, \\
(1) to control the state of the stochastic control system while  meeting  the  average pay-off constraint at each time instant and asymptotically, and \\
(2) to signal the messages encoded  into the randomized control strategies using the stochastic control system as a communication channel,   and to reproduce the messages at the output of the decoder,  with arbitrary small asymptotic error probability. \\
 Our first  goal   is to characterize $C_{FB}(\kappa)$ as function of $\kappa \in [0,\infty)$,  by an information theoretic optimization problem over randomized strategies subject to the average pay-off constraint,  using nonlinear filtering theory and sufficient statistics. The analysis of 
$C_{FB}(\kappa)$   includes  as degenerate cases, many problems of   classical stochastic optimal  control theory.  In particular, for signalling to occur it is necessary that  $C_{FB}(\kappa)\in (0,\infty)$ for $\kappa \in (\kappa_{min}, \infty)$, where  $\kappa_{min}$  is precisely the minimum  asymptotic average pay-off  over   deterministic control  strategies (i.e., nonrandomized) of  the  partially observable  nonlinear stochastic system  \cite{kumar-varayia1986} \cite{caines1988,bensoussan:B1992,elliott-aggoun-moore1995}. \\
Our  second goal is to  characterize $C_{FB}(\kappa)$ for   partially observable stochastic 
Linear-Quadratic-Gaussian (LQG) control systems.
The analysis of  of $C_{FB}(\kappa)$ includes as degenerate cases,  many noisy communication  channels with memory,  often used   in communication applications, such as, channels with Gaussian noise represented by state space models \cite{wolfowitz1975,cover-pombra1989,yang-kavcic-tatikonda2005,yang-kavcic-tatikonda2007,kim2010,gattami2019,charalambous-kourtellaris-tziortzis:SICON-2024}, molecular communication channels  \cite{galluccio:IEEEC2018,farsad:IEEECST016}, communication channels with finite states known to the encoder and/or the decoder \cite{gallager1968,chen-berger2005,tatikonda-mitter2009,permuter-cuff-roy-weissman2010,elishco-permuter2014,permuter-asnani-weissman2013,chen-berger2005}, etc. \\
In the rest of this section, 
we present the  problem formulation,   a brief summary of our main results, followed by their  relations to past literature.

\subsection{Problem Formulations and Main Results}
\label{sect:pr-mr}
\subsubsection{Nonlinear Partially Observable Stochastic System (N-POSS)}
\label{sub-section:A.1}
We consider a N-POSS  with  
 inputs $A^n\tri \{A_1, \ldots, A_n\}$,    $A_t:\Omega \rar  {\mathbb A}$, outputs $Y^n\tri \{Y_1, \ldots, Y_n\}$, $Y_t : \Omega \rar {\mathbb Y}$,  and states $X^n\tri \{X_1, \ldots, X_n\}$, $X_t: \Omega \rar {\mathbb X}$,  with  conditional distributions,    $\forall t \in  {\mathbb Z}_+^n \tri \{1,2, \ldots, n\}$: 
\begin{align}
&{\mb P}\big\{X_{t+1} \in dx_{t+1} \big| Y_{t},X_t, A_t\big\}={\bf P}_{X_{t+1}|Y_t, X_t, A_t}   \nonumber \\
& \hst =S_{t+1}(dx_{t+1}|Y_t, X_t, A_t),  \label{N-DM_1} \\
&{\mb P}\big\{Y_{t} \in dy_{t} \big| X_{t}, A_{t}\big\}={\bf P}_{Y_{t}|X_{t}, A_{t}}=Q_t(dy_t|X_t, A_t). 
 \label{N-DM_2}
\end{align}
These  distributions are  induced by    the nonlinear  recursive equations subject to an average cost constraint: 
\begin{align}
&X_{t+1}=f_t(X_t, A_t, W_t), \hso X_1=x_1, \hso \forall t \in {\mathbb Z}_+^{n-1}, \label{NDM-3}\\
& Y_{t}   = h_t(X_t, A_t, W_t),\hso \forall t \in {\mathbb Z}_+^{n},  \label{NDM-4} \\
& \frac{1}{n}{\bf E}\big\{c_{n}(A^n,X^n) \big\} \leq \kappa \in [0, \infty),\label{NDM-5} 
\end{align}
\begin{align}
{\bf P}_{X_1, W^n}= P_{X_1} \times \prod_{t=1}^n{\bf P}_{W_t} \hso \mbox{($W^n$ indep. and inde. of $X_1$)}.
 \label{NDM-6}
\end{align}
Here,  $W^n\tri \{W_1, \ldots, W_n\}$, $W_t: \Omega \rar {\mathbb W}$ is  the  noise process, and $f_t(\cdot), h_t(\cdot),  c_n(\cdot), \forall t$ are measurable functions. The cost function $c_n(\cdot)$ depends on  $(A^n, X^n)$ to reflect the cost of controlling $(X^n,Y^n)$ via $A^n$.
Throughout the paper,  unless otherwise stated,  $({\mathbb A}, {\mathbb Y}, {\mathbb X},  {\mathbb W})$, are  abstract Borel spaces,  which  include    finite-alphabet spaces,  finite-dimensional Euclidean spaces, etc. 
The N-POSS  may   correspond to stable/unstable control or communication channels with state $X^n$
\cite{caines1988}.    

{\it Operational Signalling and Control Protocol.} Given the N-POSS, we consider a  code  denoted by  $  \{(n, {\cal M}^{(n)}, \epsilon_n, \kappa)|n=1, \dots\}$,
as  follows.  \\
(a) Uniformly distr. messages $M : \Omega \rar  {\cal M}^{(n)} \tri  \{1, 2,\ldots, M^{(n)} \}$.\\
(b) Controller-Encoder  strategies $g^n(\cdot)\tri (g_1(\cdot),\ldots, g_n(\cdot)) $ mapping messages $M$, and feedback information   into inputs\footnote{ $(A_t=A_t^g, Y_t=Y^g, X_t=X_t^g)$ since these are generated from  $g(\cdot)$.},  $A_1=g_1(M), A_2=g_2(M, g_1(M), Y_1), A_3=g_3(M, g_1(M), g_2(M, g_1(M), Y_1), Y^2), \ldots$,    of block length $n$, 
 defined by\footnote{The superscript on expectation operator ${\bf E}^g$  indicates that the corresponding distribution ${\bf P}= {\bf P}^g$  depends the encoding strategy $g$.} 
\begin{align}
&{\cal E}_{n}(\kappa) \triangleq  \Big\{g_t: {\cal M}^{(n)}  \times   {\mathbb A}^{t-1} \times {\mb Y}^{t-1}  \rar {\mb A}_t, \:A_1=g_1(M), \nonumber \\
&A_2=g_2(M,g_1(M),Y_1),\ldots, A_n=g_n(M, g^{n-1}(M, Y^{n-2}), Y^{n-1})\Big|\nonumber \\
& \mbox{condition   {\bf (C1)}) holds,} \hso   \frac{1}{n}   {\bf E}^g    \big\{c_n  (A^n, X^n) \big\} \leq \kappa  \Big\},  \label{NCM-6}\\
&\mbox{\bf (C1)} \hso   {\bf P}_{A_t|A^{t-1}, Y^{t-1}, W^t, X^t}= {\bf P}_{A_t|A^{t-1}, Y^{t-1}}, \;   \forall t \in {\mathbb Z}_+^n.   \label{ci_ci_1}
\end{align}
Conditional independence {\bf (C1) } is equivalent to the Markov chain (MC), $(W^t, X^t) \leftrightarrow (A^{t-1},Y^{t-1}) \leftrightarrow A_t, \forall t \in {\mathbb Z}_+^n$. We  assume $(X_1, W^n)$ is independent of the messages $M$. 
The information to the  controller-encoder strategies  is  ${\cal I}_t^g\tri \{M,A^{t-1}, Y^{t-1}\},  A_t=g_t({\cal I}_t^g), \forall t \in {\mathbb Z}_+^n$. \\
(c)  Decoder function,  $y^n \longmapsto d_{n}(y^n)\in  {\cal M}^{(n)}$,  with average  error  probability ${\bf P}_{e}^{(n)} = \frac{1}{M^{(n)}} \sum_{i=1}^{M^{(n)}} {\mathbb  P}\big\{d_n(Y^n) \neq i|M=i\big\}= \epsilon_n\in [0,1]$.\\
A rate $R$ is called  an {\it achievable CC rate}, if there exists  a controller-encoder and a decoder  sequence satisfying
$\lim_{n\longrightarrow\infty} {\epsilon}_n=0$ and $\liminf_{n \longrightarrow\infty}\frac{1}{n}\log{M}^{(n)}\geq R$. The operational definition of the {\it CC  capacity}  is $C_{FB}(\kappa) \triangleq \sup \{R\big| R \: \: \mbox{is achievable}\}$. 

{\it Main Results of  N-POSS.} Our main results include   the following.\\
1.1) Analysis of   the sequential information theoretic characterization  of $C_{FB}$ under  Dobrushin's information stability  \cite{pinsker1964}  defined by (see  \cite{charalambous-kourtellaris-tziortzis:SICON-2024})
\begin{align}
&C_{FB}(\kappa) \tri \lim_{n \longrightarrow \infty}\frac{1}{n}C_{n,FB}(\kappa),  \\
&C_{FB,n}(\kappa)=  \sup_{{\cal P}_n\tri  \{P_t(da_t|a^{t-1}, y^{t-1})\}_ {t=1}^n, \;  \frac{1}{n}{\bf E}^P\{ c_n(A^n,X^n) \} \leq \kappa}I(A^n\rar Y^n),
\end{align}
\begin{align}
&I^P(A^n\rar Y^n) \tri  \sum_{t=1}^n I^P(A^t; Y_t|Y^{t-1}), \\
&I^P(A^t; Y_t|Y^{t-1})= H^P(Y_t|Y^{t-1})-H^P(Y_t|Y^{t-1}, A^t)
\end{align}
where $H^P(\cdot|\cdot)$ denotes conditional (differential)  entropy \cite{ihara1993}.

{\it Cost-Rate Optimization Problem.} A  dual optimization    problem to $ C_{FB,n}(\kappa)$ is the cost-rate optimization  problem defined by 
 \begin{align}
\kappa_{n}(C) \triangleq   \inf_{{\cal P}_n, \: \mbox{such that} \:  \frac{1}{n} I(A^n \rar Y^n) \: \geq \: C} 
 {\bf  E}^P\big\{c_{n}(A^n, X^{n})\big\} \label{cap_fb_1_TC_n_n}
 \end{align}
 where $C \in [0, \infty]$.
By  \cite{charalambous-kourtellaris-tziortzis:SICON-2024},     $\kappa_{n}(C)$ is a convex   non-decreasing  function in $C \in (0, \infty)$. Next,  we relate  $\kappa_{n}(C)$ to classical stochastic optimal control. 

{\it Relation to Classical Stochastic Optimal Control Problems.}
 Let ${\cal E}_{n}^{D}$ denote the    restriction of randomized strategies ${\cal P}_{n}$ to the set of deterministic strategies ${\cal E}_{n}^{D} \triangleq  \big\{ a_j= e_j(a^{{j-1}},y^{{j-1}}) | j=1,\ldots, n \big\}$.
Define the two stochastic optimal control problems with randomized and deterministic strategies, respectively,    by 
\begin{align*}
J_{n}^{SC}(P^*)\tri  \inf_{{\cal P}_{n}} 
 {\bf  E}^{P } \big\{c_{n}(A^n, X^{n})\big\}, \; J_{n}^{SC}(e^*)\tri  \inf_{ {\cal E}_{n}^{D}} 
 {\bf  E}^e \big\{c_{n}(A^n, X^n)\big\}. 
\end{align*}
Using the fact that deterministic  strategies in $J_{n}^{SC}(e^*)$ (when they exists) achieve the performance of  randomized strategies in $J_{n}^{SC}(P^*)$ (see  \cite{gihman-skorohod1979}),  we also have  (by $I(A^n \rar Y^n)\geq 0$),
\begin{align*}
\frac{1}{n}\kappa_{n}(C)  \geq \frac{1}{n}\kappa_{n}(C)\big|_{C=0}=\frac{1}{n} J_{n}^{SC}(P^*)=\frac{1}{n}J_{n}^{SC}(e^*) =\kappa_{n,min}. 
\end{align*}
Consequently, $\kappa_{n,min} \in [0,\infty)$ is the  minimum cost required to control the N-POSS with zero signalling rate.  The cost of information signalling at a rate $C\in (0,\infty)$   is $
\kappa(C)- \kappa(0)\tri \lim_{n \longrightarrow \infty} \frac{1}{n} \kappa_{n}(C)- \lim_{n \longrightarrow \infty} \frac{1}{n} \kappa_{n}(0)
$
provided the supremums and  limits exist and they are finite. For a non-zero CC  rate,  $C_{FB}(\kappa) \tri \lim_{n \longrightarrow \infty}\frac{1}{n}C_{FB,n}(\kappa)\in (0,\infty)$,    it is necessary that   $\kappa$  exceeds  the critical value   $\kappa_{min}\tri \kappa(0)$. Hence, $\kappa_{min}=\lim_{n \longrightarrow \infty}\frac{1}{n} J_{n}^{SC}(e^*)$  is precisely the asymptotic minimum cost of the classical stochastic optimal control with partial observations (i.e., corresponding to zero   information signalling).

Our analysis is  based on generalizations of the concepts of  the   "information state" and "sufficient statistic" often used in stochastic optimal control theory with deterministic or nonrandomized strategies   \cite{striebel1965,pmortensen66,kumar-varayia1986,charalambous1997role}.  
 In particular, for the analysis of $C_{FB, n}(\kappa)$ (and hence $C_{FB}(\kappa)$) we  show the following main results. \\
{\it 1.2) Theorem~\ref{theorem:aposteriori}.} A sufficient statistic for the input distribution is (i) the   information state defined by the a posteriori conditional distribution $\xi_t(A^{t-1}, Y^{t-1})\tri {\bf P}_{X_t|A^{t-1}, Y^{t-1}}, \forall t\in {\mathbb Z}_+^n$,  and  (ii)  independent uniformly distributed  RVs $U^n\tri \{U_1, \ldots, U_n\},   U_t: \Omega \rar {\mathbb U}\tri [0,1]$,  with distribution ${\bf P}_{U_i}$,   in the sense that, there exists  functions $\mu_t^u(\cdot)$ such that 
\begin{align}
&A_t={\mu}_t^u(\xi_t(A^{t-1}, Y^{t-1}), U_t),\hso  \forall t\in {\mathbb Z}_+^n, \label{stra_un}\\
&{ P}_t(da_t| a^{t-1},y^{t-1}) = {\bf P}_{U_t} \big( u_t \in {\mathbb U}\big|{\mu}_t^u(\xi_t(a^{t-1}, y^{t-1}), u_t)  \in da_t \big).\nonumber 
\end{align}
We also show that strategies  (\ref{stra_un}) can be replaced by another function  $\mu^z(\cdot, Z_t)$, with $U^n$ replaced by arbitrary independent RVs $Z^n, Z_t :\Omega \rar {\mathbb Z}$    with distribution ${\bf P}_{Z_i}$.\\
{\it 1.3) Theorem~\ref{thm:IL}.} Equivalent expressions of  $C_{FB,n}(\kappa)$ using $\mu^u(\cdot)$  and $\mu^z(\cdot)$, with  $I^P(A^n\rar y^n)=I^{\mu^u}(U^n \rar  Y^n)= I^{\mu^z}(Z^n \rar  Y^n)$.  

\vspace{0.1cm}
\subsubsection{Linear-Quadratic-Gaussian Decision Model (LQG-POSS)}
 \label{sub-section:A.2}
 We analyze  (the multiple-input multiple output (MIMO))   LQG-POSS  with $A_t :  \Omega \rar {\mathbb R}^{n_a}$,  $X_t :  \Omega \rar {\mathbb R}^{n_x}$,   $Y_t :  \Omega \rar {\mathbb R}^{n_y}$, correlated noise $V_t :  \Omega \rar {\mathbb R}^{n_y}$,   $(n_a,n_y,n_v, n_x)$ positive integers,  defined by
\begin{align}
&Y_t=D_tA_t + V_t,  \ \  V_t= C_t X_{t} +  N_t W_t, \ \   \forall t \in {\mathbb Z}_+^{n},  \label{LQG-1}\\
&X_{t+1}=F_{t} X_{t}+ B_t A_t+  G_{t} W_t, \hso \forall t \in {\mathbb Z}_+^{n-1}, \label{LQG-3}\\
 & X_1\in G(\mu_{X_1},K_{X_1}), \hso K_{X_1} \succeq 0 \hso \mbox{(pos. semi-definite)}, \label{LQG-4} \\
&W_t\in G(0,K_{W_t}),\hso K_{W_t} \succeq 0, \hso N_t K_{W_t} N_t^T \succ 0, \hso \forall t \in {\mathbb Z}_+^{n} \label{LQG-5}\\
& \frac{1}{n}{\bf E}\big\{c_{n}(A^n,X^n)\big\} \leq \kappa,  \hso  c_{n}(a^n,x^n) \tri \sum_{t=1}^n \gamma_i(a_t, x_t), \label{LQG-6} \\
&\gamma_t(a_t,x_{t}) \tri \langle a_t, R_{t} a_t \rangle+ \langle x_t, Q_{t} x_t \rangle, 
 \;
  Q_t\succeq 0, \;R_t \succ 0, \; \forall t
 \label{LQG-7}
\end{align}
where 
$X\in G(\mu_X, K_X)$ means the RV $X$ is  Gaussian with distribution ${\bf P}_{X}$,  having  mean $\mu_X$ and covariance $K_X$, and $\left\langle \cdot, \cdot \right\rangle$ denotes inner product.
We show the following main results.\\
{\it 2.1) Lemma~\ref{lemma_POSS}, Lemma~\ref{lemma_Gaussian}.}  $C_{FB,n}(\kappa)$ is a maximum entropy  problem, because  ${\bf P}_{X_t|A^{t-1}, Y^{t-1}}$ has conditional covariance $\Sigma_t$,  independent of $(A^{t-1}, Y^{t-1})$, satisfying   a filtering matrix  DRE  (see  (\ref{dre_1})) and implies, 
\begin{align}
& H(Y_t|Y^{t-1}, A^t)= H(I_t), \hso I_t\tri Y_t-{\bf E}\big\{Y_t|Y^{t-1}, A^t\big\}, \hso \forall t \in {\mathbb Z}_+^n, \nonumber  \\
&H(I_t)=\frac{1}{2} \log\big( (2\pi e)^{n_y} \det\big(C_t \Sigma_{t} C_t^T + N_t K_{W_t} N_t^T\big) \big).
\end{align}
{\it  2.2)  Theorem~\ref{thm_SS}.} An equivalent characterization of $C_{FB,n}(\kappa)$  corresponding to the  optimal strategy of  $A^n$ is realized  by 
\begin{align}
&A_t=  \Gamma_{t}^1\hat{X}_t + U(Y^{t-1}) +Z_t,  \: \: U(Y^{t-1})=\Gamma_t^2 \widehat{\hat{X}}_t, \:  \forall t  \in {\mathbb Z}_+^n,  \label{suf_s} \\
&\hat{X}_t \tri {\bf E}\big\{X_t\big|Y^{t-1}, A^{t-1}\big\}, \; \widehat{\hat{X}}_t\tri {\bf E}\big\{\hat{X}_t\big|Y^{t-1}\big\}, \; Z_t \in G(0,K_{Z_t}) \nonumber 
\end{align}
where $Z^n$ is an independent Gaussian process, independent of $(\hat{X}^n, \widehat{\hat{X}}^n)$. The above means,  for  each  $t$, $\mu_t^z(\cdot)$ depends on the {\it finite-dimensional sufficient statistic}  $\big(\hat{X}_t, \widehat{\hat{X}}_t, Z_t\big)$.  Moreover,  $C_{FB,n}(\kappa)$ is expressed,  as a functional of  two  matrix DREs of filtering theory of Gaussian systems,  $\Sigma_t$ and    $K_{t} \tri  {\bf E}\big\{\big(\hat{X}_t- \widehat{\hat{X}}_{t}\big)\big(\hat{X}_t-\widehat{\hat{X}}_{t} \big)^T\big|Y^{t-1}\big\}$, which is also independent of $Y^{t-1}$.\\
{\it 2.3) Theorem~\ref{thm_SS}.} A decentralized separation principle holds as follows:  \\
(i) The optimal strategy $U(\cdot)$ is explicitly determined from the solution of a stochastic LQG optimal control problem of minimizing the average cost \cite{caines1988}, and  involves the solution of a  control matrix DRE $P_t\succeq 0$, evolving  backward in time.   \\
(ii) The optimal strategy $(\Gamma^1, K_{Z})$ is determined by maximizing the   differential  entropy of $H(Y^n)$ subject to the average constraint with optimal  $U(\cdot)$ obtained in (i). \\
Finally, we  discuss  the limit  convergence,  $C_{FB} \tri \lim_{n \longrightarrow \infty} \frac{1}{n} C_{FB,n}$,  using the convergence  properties of  matrix DRE to analogous matrix algebraic Riccati equations (AREs),  as in \cite[Section~III]{charalambous2020new}.

\subsection{Relations to Past Literature}
$C_{FB}(\kappa)$ of  channels with memory is often characterized by directed information from the input of  the channel $A^n$ to the output of the channel $Y^n$, i.e., $I(A^n\rar Y^n)$  optimized over channel inputs with feedback,  $P_t(da_t|a^{t-1}, y^{t-1}), \forall t \in {\mathbb Z}_+^n$ subject to cost constraints \cite{massey1990,permuter-weissman-goldsmith2009}.
  Cover and Pombra \cite[Theorem~1]{cover-thomas2006} characterized  $C_{FB}(\kappa)$  of additive  Gaussian noise (AGN) channels, $Y_t=X_t+V_t, \forall t \in {\mathbb Z}_+^n, \frac{1}{n} {\bf E}\{\sum_{t=1}^n (A_t)^2\}\leq \kappa$,  with nonstationary and nonergodic jointly Gaussian noise $V^n$.
$C_{FB}(\kappa)$  of     Cover and Pombra AGN channel is characterized in  \cite{yang-kavcic-tatikonda2007,kim2008,kim2008,gattami2019}
for the degenerate case when the noise is    asymptotically stationary, with   state space realization of $V^n$.
  $V_t= C_t X_{t} +   W_t$, 
$X_{t+1}=F_{t} X_{t}+   W_t,\forall t$, such that  $P_t(da_t|a^{t-1}, y^{t-1})=P_t(da_t|x^{t}, y^{t-1})$,  i.e., the encoder  distribution at each time $t$,   knows the state of the channel $X^t$ 
(see the hidden assumptions in  \cite[Theorem~6.1, Lemma~6.1]{kim2010} and \cite{gattami2019}). These  assumptions are explicit in the analysis in \cite{yang-kavcic-tatikonda2007}.  When the state $X^n$ is not known to the encoder,  a sequential version of  the Cover and Pombra $C_{FB}(\kappa)$ is first derived in  \cite[Theorem~II.3]{charalambous2020new} and is different from \cite[Theorem~6.1, Lemma~6.1]{kim2010} and \cite{yang-kavcic-tatikonda2007,gattami2019}. The method of  \cite{charalambous2020new} is used in \cite{CDC-CK-SL:ITW2021}  to derive $C_{FB}(\kappa)$ for MIMO channels.
 For  finite-state channels with memory \cite{permuter-asnani-weissman2013,permuter-cuff-roy-weissman2010}, treated  $C_{FB}(\kappa)$, when the state of the channel is known to the encoder. \\ 
 The N-POSS (\ref{NDM-3})-(\ref{NDM-6})  includes the models of the above references.  More importantly, the code does not assume knowledge of the state $X^n$ at the encoder,  and $X^n$ is affected by $A^{n-1}$. This  generality requires additional  concepts  to characterize $C_{FB}(\kappa)$.  However, in  Remark~\ref{rem:rel}, we show that   from $C_{FB,n}(\kappa)$  of the LQG-POSS,  we  recover $C_{FB,n}(\kappa)$ of \cite{charalambous2020new,CDC-CK-SL:ITW2021} and \cite{yang-kavcic-tatikonda2007,kim2008,kim2008,gattami2019}
 as special  cases.

 \section{Characterization of $C_{FB,n}$ via Information State}
  \label{sect:AGN}
First, we   introduce the characterization  of $C_{FB,n}(\kappa)$ of   the N-POSS. 

\begin{theorem} ($C_{FB,n}$  with randomized strategies) \\
\label{thm_cons}
Consider the N-POSS,  the code $ \{(n, {\cal M}^{(n)}, \epsilon_n, \kappa)|n=1, 2, \dots\}$, and  assume the conditions of 
the converse/direct coding theorems  in  \cite{charalambous-kourtellaris-tziortzis:SICON-2024} hold. 
 Define the randomized strategies ${\cal P}_{n}(\kappa)$ 
by   
\begin{align*}
&{\cal P}_{n}(\kappa) \tri \Big\{ {\bf P}_{A_t|A^{t-1}, Y^{t-1}}={P}_t(da_t\big| a^{t-1},  y^{t-1}), \forall t \in {\mathbb Z}_+^n\Big| \nonumber  \\
 &\mbox{ {\bf  (C1)} holds and } \frac{1}{n} {\bf E}^{ P}\big( c_n(X^n,A^n)\big) \leq \kappa   \Big\} \subseteq {\cal E}_n(\kappa) 
\end{align*} 
where  (C1) follows from  the definition of the feedback code. 
The  directed  information from  $A^n$ to $Y^n$  is defined by 
\begin{align}
I^P(A^n \rar  Y^n)  \tri 
 {\bf E}^P \Big\{ \sum_{t=1}^n
\log\Big(\frac{{\bf P}_{Y_t|Y^{t-1}, A^t}^P}{{\bf P}^{P}_{Y_t|Y^{t-1}}}\Big) \Big\}  \label{di_general}
\end{align}
where  ${\bf P}_{Y_t|Y^{t-1}, A^t}^P $ and ${\bf P}_{Y_t|Y^{t-1}}^P $,  for $\forall t \in {\mathbb Z}_+^n$ are given by 
\begin{align}
&{\bf P}_{Y_t|Y^{t-1}, A^t}^P = \int_{{\mathbb X}} Q_t(dy_t|x_t, a_t)
  {\bf P}_t^P(dx_t|y^{t-1}, a^{t-1}),   \label{cd_c1}  \\
&{\bf P}_{Y_t|Y^{t-1}}^P = \int_{{\mathbb A}^t  \times {\mathbb X}^t}   { Q}_t(dy_t\big| x_t, a_t)  { P}_t(da_t|y^{t-1}, a^{t-1}) \nonumber \\
&\hspace{2.0cm}  {\bf P}_t^P(dx_t\big|a^{t-1}, y^{t-1})
 {\bf P}_t^P(da^{t-1}|y^{t-1}).   \label{cd_o1}
\end{align}
The CC capacity  is given  by  ${C}_{FB}(\kappa) \tri \lim_{n \longrightarrow \infty} \frac{1}{n} {C}_{FB,n}(\kappa)$, 
where   
 ${C}_{FB,n}(\kappa)\tri  \sup_{{\cal P}_{n}(\kappa)}  I^P(A^n \rar Y^n)$.  
\end{theorem}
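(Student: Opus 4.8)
\textbf{Proof proposal for Theorem~\ref{thm_cons}.}

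The plan is to reduce the statement to the converse and direct coding theorems of \cite{charalambous-kourtellaris-tziortzis:SICON-2024} by verifying that the feedback code $\{(n,{\cal M}^{(n)},\epsilon_n,\kappa)\}$ for the N-POSS fits into that framework, and that the directed-information functional $I^P(A^n\rar Y^n)$ induced by the N-POSS admits the claimed closed form via the information state $\xi_t(a^{t-1},y^{t-1})={\bf P}_t^P(dx_t|a^{t-1},y^{t-1})$. First I would establish that condition \textbf{(C1)}, i.e.\ the Markov chain $(W^t,X^t)\leftrightarrow(A^{t-1},Y^{t-1})\leftrightarrow A_t$, is automatically satisfied by any controller--encoder strategy of the form $A_t=g_t(M,A^{t-1},Y^{t-1})$, because $(X_1,W^n)$ is independent of $M$ and $A_t$ is a deterministic function of $(M,A^{t-1},Y^{t-1})$; this is what lets the randomized strategy class ${\cal P}_n(\kappa)$ be embedded in ${\cal E}_n(\kappa)$ and, conversely, lets every deterministic feedback code induce an element of ${\cal P}_n(\kappa)$ with the same cost. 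The joint law ${\bf P}^g$ then factors as $P_{X_1}\prod_t {\bf P}_{W_t}\prod_t {\bf P}_{A_t|A^{t-1},Y^{t-1}}$ times the channel kernels $S_t,Q_t$ of \eqref{N-DM_1}--\eqref{N-DM_2}, which is exactly the unit-memory channel-with-feedback structure for which the coding theorems of \cite{charalambous-kourtellaris-tziortzis:SICON-2024} were proved.

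Next I would derive the two kernel identities \eqref{cd_c1} and \eqref{cd_o1}. For \eqref{cd_c1}, I condition on $(Y^{t-1},A^t)$ and use \eqref{N-DM_2}: since $Y_t$ depends on the past only through $(X_t,A_t)$, integrating $Q_t(dy_t|x_t,a_t)$ against the information state ${\bf P}_t^P(dx_t|a^{t-1},y^{t-1})$ gives ${\bf P}^P_{Y_t|Y^{t-1},A^t}$; here one uses that, given $A^{t-1}$, the extra conditioning on $A_t$ does not change the conditional law of $X_t$ because $A_t\leftrightarrow(A^{t-1},Y^{t-1})\leftrightarrow X_t$ under \textbf{(C1)}. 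For \eqref{cd_o1}, I marginalize over $A^t$ using the chain rule ${\bf P}_t^P(da^t|y^{t-1}) = P_t(da_t|a^{t-1},y^{t-1})\,{\bf P}_t^P(da^{t-1}|y^{t-1})$ and combine with \eqref{cd_c1}. With these two identities, the summands $\log({\bf P}^P_{Y_t|Y^{t-1},A^t}/{\bf P}^P_{Y_t|Y^{t-1}})$ are well-defined functions of the information state and the strategy, so $I^P(A^n\rar Y^n)$ in \eqref{di_general} equals $\sum_t I^P(A^t;Y_t|Y^{t-1})$, matching the mutual-information decomposition in the hypotheses.

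Finally, the achievability (direct) part: given a rate $R<C_{FB,n}(\kappa)$, pick a near-optimal ${\cal P}_n(\kappa)$; since ${\cal P}_n(\kappa)\subseteq{\cal E}_n(\kappa)$ and the embedding preserves both the directed information and the average cost, the direct coding theorem of \cite{charalambous-kourtellaris-tziortzis:SICON-2024} — invoked under Dobrushin's information stability assumed in the hypotheses — produces controller--encoder/decoder sequences with $\epsilon_n\to0$ and $\liminf_n\frac1n\log M^{(n)}\ge R$, so $R$ is achievable. For the converse, any code achieving rate $R$ with $\epsilon_n\to0$ induces, via the strategies $A_t=g_t(M,A^{t-1},Y^{t-1})$ and marginalization over $M$, a feasible ${\cal P}_n(\kappa)$; Fano's inequality together with the data-processing property of directed information bounds $\frac1n\log M^{(n)}$ by $\frac1n I^P(A^n\rar Y^n)+o(1)\le \frac1n C_{FB,n}(\kappa)+o(1)$. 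Taking $n\to\infty$ gives $C_{FB}(\kappa)=\lim_n\frac1n C_{FB,n}(\kappa)$ with $C_{FB,n}(\kappa)=\sup_{{\cal P}_n(\kappa)} I^P(A^n\rar Y^n)$.

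I expect the main obstacle to be the careful justification of \eqref{cd_o1} — specifically, showing that the marginalization over $A^t$ is legitimate measure-theoretically on the abstract Borel spaces $({\mathbb A},{\mathbb Y},{\mathbb X},{\mathbb W})$ and that the resulting output kernel is genuinely a function of the information state alone (so that no hidden dependence on $(A^{t-1},Y^{t-1})$ beyond $\xi_t$ leaks in). The subtlety is that, unlike in the state-known-at-encoder models of \cite{yang-kavcic-tatikonda2007,kim2010,gattami2019}, here the encoder does \emph{not} observe $X^n$ and $X^n$ is driven by $A^{n-1}$, so the recursion for $\xi_t$ must be invoked (it is the filtering/Bayes update under $S_{t+1},Q_t$) rather than assumed; making this rigorous, and confirming the regularity conditions under which the coding theorems of \cite{charalambous-kourtellaris-tziortzis:SICON-2024} apply verbatim, is where the real work lies.
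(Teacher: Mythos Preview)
Your proposal is correct and follows essentially the same route as the paper: the paper's proof simply states that the derivation is based on the converse/direct coding theorems in \cite{charalambous-kourtellaris-tziortzis:SICON-2024} and is therefore omitted, and your sketch is precisely such a reduction, fleshed out with the verification of \textbf{(C1)}, the derivation of the kernel identities \eqref{cd_c1}--\eqref{cd_o1}, and the standard Fano/achievability invocation. If anything, you supply more detail than the paper does.
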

\begin{proof} The derivation is based on  of  the proof of   the converse/direct coding theorems  in  \cite{charalambous-kourtellaris-tziortzis:SICON-2024}, hence we omit it. 
\end{proof}

In  $C_{FB,n}(\kappa)$,
 the fundamental element that determines the properties of  the maximizing elements  $\{{ P}_{t}(da_t|a^{t-1}, y^{t-1})|    t\in {\mathbb Z}_+^n\}$ is the a posteriori distribution  $\{{\bf P}_{t}^P(dx_t|a^{t-1}, y^{t-1})| t\in {\mathbb Z}_+^n\}$, and its relation to an informations state, as defined in Appendix~\ref{app:is-suf}.

 \subsection{Information States and  Sufficient Statistics}
Now we prepare to show the statements of Section~\ref{sect:pr-mr}. 

\begin{lemma} 
\label{lemma_c}
  (Consistent family of randomized strategies) \\
(a) For any element of ${\cal P}_{n}(\kappa)$ there exist independent  uniformly distributed 
 RVs $U^n$,  $U_t : \Omega \rar {\mathbb U}_t=[0,1], \forall t\in {\mathbb Z}_+^n$, 
 and  measurable functions  ${\mu}_t^u(\cdot)$ of  $\{a^{t-1},y^{t-1},u_t\}$, such that
\begin{align}
 & a_t={\mu}_t^u(a^{t-1},y^{t-1}, u_t), \hso \forall t \in {\mathbb Z}_+^n,  \\ 
&{ P}_t(da_t| a^{t-1},y^{t-1}) = {\bf P}_{U_t} \big( u_t \in  [0,1]\big|{\mu}_t^u(a^{t-1}, y^{t-1}, u_t)  \in da_t \big), 
\nonumber 
\\
&\mbox{{\bf (C2).} $U^n$  independent RVs,  $U_t$ independent of   $(W^{t}, X_1)$,} \nonumber \\
&\hst \mbox{   ${\bf P}_{A_t|A^{t-1}, Y^{t-1}, W^t, X^t}={ P}_t(da_t| a^{t-1},y^{t-1}) , \forall t\in {\mathbb Z}_+^n$}. \nonumber 
 \end{align}
Moreover, $\delta_t^u\tri (a^{t-1}, y^{t-1}, u_t)$ is a sufficient statistic for ${ P}_t(da_t| a^{t-1},y^{t-1})$. \\ 
(b) Similarly to (a)  there exist arbitrary independent  
 RVs $Z^n$,  $Z_t : \Omega \rar {\mathbb Z}_t, \forall t\in {\mathbb Z}_+^n$, and  meas. functions  ${\mu}_i^z(\cdot)$  such that
\begin{align}
 & a_t={\mu}_t^z(a^{t-1},y^{t-1}, z_t),  \hso \forall t \in {\mathbb Z}_+^n,
 \label{AR_IL}\\
&{ P}_t(da_t| a^{t-1},y^{t-1}) = {\bf P}_{Z_t} \big( z_t \in  {\mathbb Z}_t \big|{\mu}_t^z(a^{t-1}, y^{t-1}, z_t)  \in da_t \big) \nonumber 
\\
&\mbox{{\bf (C3).} $Z^n$  independent RVs,  $Z_t$ independent of   $(W^{t}, X_1)$,} \nonumber \\
&\hst \mbox{   ${\bf P}_{A_t|A^{t-1}, Y^{t-1}, W^t, X^t}={ P}_t(da_t| a^{t-1},y^{t-1}) , \forall t\in {\mathbb Z}_+^n$}.\nonumber 
 \end{align}
Moreover, $\delta_t^z\tri (a^{t-1}, y^{t-1}, z_t)$ is a sufficient statistic for ${ P}_t(da_t| a^{t-1},y^{t-1})$.
\end{lemma}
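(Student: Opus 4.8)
The plan is to obtain both parts from the \emph{functional representation lemma} (the noise-outsourcing / transfer theorem): any Borel probability kernel can be realized as a deterministic measurable function of its conditioning variables together with an independent uniform random variable. I would apply this one time step at a time and then reassemble the single-step representations on a common probability space so that the feedback Markov structure \textbf{(C1)} is preserved.

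\emph{Single-step representation.} Fix $\{P_t(da_t\mid a^{t-1},y^{t-1})\}_{t=1}^n\in{\cal P}_n(\kappa)$. For each $t$ the map $(a^{t-1},y^{t-1})\mapsto P_t(\cdot\mid a^{t-1},y^{t-1})$ is a Borel probability kernel into ${\mathbb A}_t$. Since ${\mathbb A}_t$ is a Borel space it admits a bi-measurable isomorphism $\psi_t:{\mathbb A}_t\to B_t\subseteq[0,1]$; pushing the kernel through $\psi_t$ yields a kernel valued in probability measures on $[0,1]$ with jointly measurable distribution function $F_t(r\mid a^{t-1},y^{t-1})$, and I would set $\mu_t^u(a^{t-1},y^{t-1},u_t)\tri\psi_t^{-1}\big(\inf\{r\in[0,1]:F_t(r\mid a^{t-1},y^{t-1})\ge u_t\}\big)$. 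A standard argument shows $\mu_t^u$ is jointly measurable in $(a^{t-1},y^{t-1},u_t)$, and that if $U_t\sim\mathrm{Unif}[0,1]$ is independent of $(a^{t-1},y^{t-1})$ then $\mu_t^u(a^{t-1},y^{t-1},U_t)$ has law $P_t(\cdot\mid a^{t-1},y^{t-1})$, which is exactly the displayed identity for $P_t$ in part (a).

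\emph{Global construction and the conditions \textbf{(C2)}.} Next I would enlarge the underlying space to carry mutually independent $U^n$ with $U_t\sim\mathrm{Unif}[0,1]$ and $U^n$ independent of $(X_1,W^n)$, and define recursively $A_t\tri\mu_t^u(A^{t-1},Y^{t-1},U_t)$ together with $X_{t+1}=f_t(X_t,A_t,W_t)$ and $Y_t=h_t(X_t,A_t,W_t)$ as in \eqref{NDM-3}--\eqref{NDM-4}. An induction on $t$ shows $(A^{t-1},Y^{t-1},X^t)$ is a measurable function of $(X_1,W^{t-1},U^{t-1})$, hence $U_t\perp(A^{t-1},Y^{t-1},X^t,W^t)$; therefore $A_t$ is conditionally independent of $(X^t,W^t)$ given $(A^{t-1},Y^{t-1})$, i.e.\ \textbf{(C2)} holds (and in particular the feedback condition \textbf{(C1)}), and the conditional law of $A_t$ given $(A^{t-1},Y^{t-1})$ equals $P_t$. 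A second, parallel induction (an Ionescu--Tulcea type argument) identifies the joint law of $(X^n,A^n,Y^n)$ under the constructed strategy with the one under the original $P$, so the average-cost constraint carries over and the constructed strategy stays in ${\cal P}_n(\kappa)$. Finally $\delta_t^u\tri(a^{t-1},y^{t-1},u_t)$ is a sufficient statistic for $P_t(da_t\mid a^{t-1},y^{t-1})$ in the obvious sense: $A_t=\mu_t^u(\delta_t^u)$ is a \emph{deterministic} function of $\delta_t^u$, so conditioning on $\delta_t^u$ renders the realization of $P_t$ independent of everything else.

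\emph{Part (b) and the main obstacle.} Part (b) reduces to (a). Given arbitrary independent $Z^n$ with $Z_t:\Omega\to{\mathbb Z}_t$ Borel and ${\bf P}_{Z_t}$ rich enough to generate a uniform (e.g.\ nonatomic; the degenerate purely atomic case being excluded, or handled by restricting to $Z_t$ whose law dominates a nonatomic one), the same isomorphism-plus-quantile construction produces a measurable $\phi_t:{\mathbb Z}_t\to[0,1]$ with $(\phi_t)_\ast{\bf P}_{Z_t}=\mathrm{Unif}[0,1]$; setting $\mu_t^z(a^{t-1},y^{t-1},z_t)\tri\mu_t^u(a^{t-1},y^{t-1},\phi_t(z_t))$ and noting that $\phi_t(Z_t)$ inherits the independence of $Z_t$ from $(W^t,X_1)$ and from $Z^{t-1}$, the argument of (a) applies verbatim and yields \eqref{AR_IL}, the corresponding identity for $P_t$ in terms of ${\bf P}_{Z_t}$, condition \textbf{(C3)}, and sufficiency of $\delta_t^z$. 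I expect the only genuine technical work to be the \emph{joint} measurability of the conditional quantile maps $\mu_t^u$ and $\phi_t$ in all their arguments simultaneously --- that the generalized inverse of a measurably parametrized family of distribution functions on a general Borel alphabet is again jointly measurable --- together with the bookkeeping of the two nested inductions that guarantee \textbf{(C1)}/\textbf{(C2)} and the identity of joint laws under the feedback recursion; the reduction from general Borel alphabets to $[0,1]$ via Borel isomorphism is routine.
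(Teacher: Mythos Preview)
Your proposal is correct and follows essentially the same route as the paper: the paper's proof of (a) is simply a citation to \cite[Lemma~1.2]{gihman-skorohod1979} (precisely the noise-outsourcing/quantile construction you spell out) together with the remark that \textbf{(C2)} is inherited from \textbf{(C1)}, and for (b) it again just invokes ``properties of quantile representation of distributions.'' You have merely unpacked those citations, supplying the Borel-isomorphism-plus-generalized-inverse construction and the Ionescu--Tulcea bookkeeping explicitly; your caveat that ${\bf P}_{Z_t}$ must be nonatomic for the reduction in (b) to go through is a fair reading of the implicit hypothesis behind the paper's one-line justification.
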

\begin{proof} (a) The existence of $\mu_t^u$ with the stated properties  is an application of 
 \cite[Lemma~1.2]{gihman-skorohod1979}, as described in  \cite[Section~III]{charalambous-kourtellaris-loykaIT2015_Part_2};  (C2) is  inherited from condition (C1), and $\delta_t^u$ is due to the definition of  
  Definition~\ref{def:is}.(b). 
    (b) Follows from properties of quantile representation of distributions  \cite[Section~III]{charalambous-kourtellaris-loykaIT2015_Part_2}.
\end{proof}

Next, we verify  ${\bf P}_{t}^P(dx_t|a^{t-1}, y^{t-1})$ is an information state using  strategies $\mu_t^u(\cdot), \mu_t^z(\cdot)$, $\forall t\in {\mathbb Z}_+^n$ of   Lemma~\ref{lemma_c}.

\begin{theorem} (Information state and separated strategies)\\
\label{theorem:aposteriori}
Consider the N-POSS and $C_{n,FB}(\kappa)$ of Theorem~\ref{thm_cons}.\\
(a) The a posteriori distribution ${\bf P}^P_{t+1}(dx_{t+1}|a^{t}, y^{t})$ satisfies the recursion\footnote{All distributions are assumed to have probability densities i.e., ${\bf P}^P_{t+1}(dx_{t+1}|a^{t}, y^{t})={\bf p}^P_{t+1}(x_{t+1}|a^{t}, y^{t})dx_{t+1}$.}, $\forall t \in {\mathbb Z}_+^n$: 
 \begin{align}
&{\bf P}_{t+1}^P(dx_{t+1}\big|a^{t}, y^{t}) \label{apost_1}    \\
&=\frac{{\bf T}_{t+1}(y_{i}, a_{t}, {\bf P}_{t}^P(\cdot\big|a^{t-1}, y^{t-1}))(dx_{t+1})    }{\int_{ {\mathbb X}}  {\bf T}_{t+1}(y_{t}, a_{t}, {\bf P}_{t}^P(\cdot\big|a^{t-1}, y^{t-1}))(dx_{t+1})    },  \: {\bf P}_1^P(dx_1\big|y^{0})= {\bf P}_{X_1} \nonumber \\
&{\bf T}_{t+1}(y_{t}, a_{t},{\bf P}_{t}^P(\cdot\big|a^{t-1}, y^{t-1}))(dx_{t+1}) \tri\int_{{\mathbb X}} S_{t+1}(dx_{t+1}\big|y_{t}, x_{t},a_{t}) \nonumber \\
&\hst \otimes Q_{t}(dy_{t}\big|x_{t}, a_{t})   \otimes {\bf P}_{t}^P(dx_{t}\big|a^{t-1}, y^{t-1}) .   
\end{align}
and $\{{\bf P}^P_{t+1}(dx_{t+1}|a^{t}, y^{t})\big| t \in {\mathbb Z}_+^n\}$ is an information state. \\
(b)  $\forall t \in {\mathbb Z}_+^n$, distr. ${ P}_t(da_t| a^{t-1},y^{t-1})$ in the definition of distr.  ${\bf P}_{Y_t|Y^{t-1}}^P$, i.e., (\ref{cd_o1}), are induced by $\mu^u(\cdot)$ or $\mu^z(\cdot)$ of Lemma~\ref{lemma_c}.(a), (b).\\
(c)  $C_{FB,n}(\kappa)$ is  re-formulated  using     separated strategies,  $a_t=\mu_t^u(\xi_t(a^{t-1}, y^{t-1}),u_t)$  or  $a_t=\mu_t^z(\xi_t(a^{t-1}, y^{t-1}),z_t)$, $\xi_t(a^{t-1}, y^{t-1})\tri {\bf P}_{t}^P(dx_t|a^{t-1}, y^{t-1})$  as follows.  
\begin{align}
&C_{FB,n}(\kappa) \\
& =\sup_{ \{\mu_t^u(\xi_t(a^{t-1}, y^{t-1}),u_t)   \}_ {t=1}^n, \;  \frac{1}{n}{\bf E}^{\mu^u}\big( c_n(A^n,X^n) \big) \leq \kappa}\sum_{t=1}^n I^{\mu^u}(A^t; Y_t|Y^{t-1})\nonumber \\
&= \sup_{ \{ \mu_t^z(\xi_t(a^{t-1}, y^{t-1}),z_t)   \}_ {t=1}^n, \;  \frac{1}{n}{\bf E}^{\mu^z}\big( c_n(A^n,X^n) \big) \leq \kappa}  \sum_{t=1}^n I^{\mu^z}(A^t; Y_t|Y^{t-1}). \nonumber
\end{align}
%
\end{theorem}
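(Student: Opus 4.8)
The plan is to establish the three claims in order, leveraging the information-state machinery set up in the appendix and the consistent-family representation of Lemma~\ref{lemma_c}.

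For part (a), I would first derive the filtering recursion by a direct computation. Starting from the joint law of $(X_{t+1}, X_t, Y_t)$ conditioned on $(a^{t-1}, y^{t-1})$, I would use the system's conditional distributions \eqref{N-DM_1}--\eqref{N-DM_2}: namely, $X_{t+1}$ given $(Y_t, X_t, A_t)$ has law $S_{t+1}(\cdot\mid y_t, x_t, a_t)$, and $Y_t$ given $(X_t, A_t)$ has law $Q_t(\cdot\mid x_t, a_t)$. The key point justifying this factorization is condition {\bf (C1)}: since $A_t$ is conditionally independent of $(W^t, X^t)$ given $(A^{t-1}, Y^{t-1})$, conditioning further on $a_t$ does not alter the a posteriori law ${\bf P}_t^P(dx_t\mid a^{t-1}, y^{t-1})$ of $X_t$. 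Then Bayes' rule, applied to the arrival of the new observation $y_t$, yields \eqref{apost_1} with the unnormalized operator ${\bf T}_{t+1}$ as the composition of $S_{t+1}$, $Q_t$, and ${\bf P}_t^P$, and the denominator being the normalizing constant. The second half of (a)---that $\{{\bf P}_{t+1}^P(dx_{t+1}\mid a^t, y^t)\}$ is an information state---I would verify against Definition~\ref{def:is} in Appendix~\ref{app:is-suf}: one must check (i) that it is recursively updated via a measurable map of its previous value and the new data $(a_t, y_t)$, which is exactly the content of the recursion just derived, and (ii) that the per-stage quantities entering $C_{FB,n}(\kappa)$---namely ${\bf P}_{Y_t\mid Y^{t-1}, A^t}^P$ and ${\bf P}_{Y_t\mid Y^{t-1}}^P$ in \eqref{cd_c1}--\eqref{cd_o1}---are functionals of this posterior together with the input law, which is immediate from those two display equations once one notes that the posterior ${\bf P}_t^P(dx_t\mid a^{t-1}, y^{t-1})$ appears as the only channel-dependent object.

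For part (b), I would invoke Lemma~\ref{lemma_c}: every element of ${\cal P}_n(\kappa)$ admits the representation $a_t = \mu_t^u(a^{t-1}, y^{t-1}, u_t)$ with $U^n$ independent uniform (or $a_t = \mu_t^z(a^{t-1}, y^{t-1}, z_t)$), and the induced kernel ${ P}_t(da_t\mid a^{t-1}, y^{t-1})$ is the pushforward of ${\bf P}_{U_t}$ under $\mu_t^u$. Substituting this pushforward for ${ P}_t$ in \eqref{cd_o1} is a change-of-variables identity for the integral over ${\mathbb A}^t$; conditions {\bf (C2)} / {\bf (C3)} guarantee that the auxiliary randomization does not corrupt the Markov structure {\bf (C1)}, so the posterior recursion of part (a) is unchanged. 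This is essentially bookkeeping.

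Part (c) is the synthesis step and where I expect the real obstacle. Having established in (a) that the posterior $\xi_t(a^{t-1}, y^{t-1}) = {\bf P}_t^P(dx_t\mid a^{t-1}, y^{t-1})$ is an information state, and in (b) that the objective $I^P(A^n\to Y^n) = \sum_t I^P(A^t; Y_t\mid Y^{t-1})$ depends on the strategy only through $\xi_t$ and the conditional input law, I would argue that it is without loss of optimality to restrict the kernel ${ P}_t$ to depend on $(a^{t-1}, y^{t-1})$ only via $\xi_t$; equivalently, to restrict $\mu_t^u$ to the separated form $\mu_t^u(\xi_t(a^{t-1}, y^{t-1}), u_t)$. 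The delicate point is the cost constraint: one must check that ${\bf E}^{\mu^u}\{c_n(A^n, X^n)\}$ is likewise a functional of the $\xi_t$-sequence (through ${\bf P}_t^P$ and the marginal over $a^t$), so that the feasible set is preserved under the restriction---this requires that $c_n$ depends on $(a^n, x^n)$ in a way that is ``observable'' through the posteriors and the control marginals, which follows since ${\bf E}^{\mu^u}\{c_n\} = {\bf E}^{\mu^u}\{\int c_n(a^n, x^n)\,\prod_t {\bf P}_t^P(dx_t\mid a^{t-1}, y^{t-1})\,\cdots\}$ after conditioning. The remaining subtlety is a dynamic-programming / backward-induction argument showing no strict loss: given any feasible non-separated strategy, one constructs a separated one with the same joint law of $(\xi^n, A^n, Y^n)$ by replacing, at each $t$, the conditional law of $A_t$ given the full past by its version conditioned on $\xi_t$ only---this is legitimate precisely because all per-stage objective and cost terms factor through $\xi_t$. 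The identity $I^P(A^n\to Y^n) = I^{\mu^u}(U^n\to Y^n) = I^{\mu^z}(Z^n\to Y^n)$ promised in result 1.3 (Theorem~\ref{thm:IL}) then follows because $Y^n$ is a deterministic function of $(\xi$-initialization$, U^n)$ and vice versa the randomness in $A^n$ given the past is exactly $U_t$, so the directed informations coincide; but that identity belongs to the next theorem and I would only flag it here. The hardest part is the loss-free reduction in (c): making rigorous that passing from $(a^{t-1}, y^{t-1})$-measurable kernels to $\xi_t$-measurable kernels neither shrinks the achievable directed information nor violates feasibility, which is where one genuinely uses the information-state property rather than mere notation.
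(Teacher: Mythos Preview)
Your proposal is correct and follows essentially the same line as the paper: Bayes' rule with densities and condition {\bf (C1)} for the recursion in (a), Definition~\ref{def:is} for the information-state property, Lemma~\ref{lemma_c} for (b), and then (c) as a consequence of (a) and (b). The paper's own proof is extremely terse---part (c) reads in its entirety ``Follows from (a), (b)''---so your dynamic-programming reduction (replacing the full-history kernel by its $\xi_t$-conditioned version and checking that both the per-stage mutual information terms and the cost factor through $\xi_t$) is exactly the content the paper leaves implicit; you are not doing anything different, you are just writing out what the paper asserts without argument.
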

\begin{proof} (a) By the existence of probability densities, 
${\bf P}_{t+1}^P(dx_{t+1}\big|a^{t}, y^{t})={\bf p}_{t+1}^P(x_{t+1}\big|a^{t}, y^{t})dx_{t+1}$,   ${\bf p}_{t+1}^P(x_{t+1}\big|a^{t}, y^{t})=\frac{{\bf p}_{t+1}^P(x_{t+1}, a^{t}, y^{t})}{{\bf p}_{t}^P(a^{t}, y^{t})}$. Then ${\bf p}_{t+1}^P(x_{t+1}, a^{t}, y^{t})= \int_{{\mathbb X}_{t} }{\bf p}_{t+1}^P(x_{t+1}, x_t, a^{t}, y^{t})$. Using the N-POSS distributions and condition (C1) we obtain the recursion.   By  Definition~\ref{def:is} 
the  a posteriori distribution   is an information state. (b) Follows  from Lemma~\ref{lemma_c}. (c) Follows from  (a), (b).  
\end{proof}

\vspace*{-0.3cm}

Next, 
we restrict $\mu^{u}(\cdot), \mu^{z}(\cdot)$ to be information lossless  to show  $I^P(A^n\rar Y^n)=\sum_{t=1}^n I^{\mu^u}(U^t; Y_t|Y^{t-1})=\sum_{t=1}^n I^{\mu^z}(Z^t; Y_t|Y^{t-1})$.

\begin{theorem}(Equivalent characterizations of $C_{FB,n}(\kappa)$)\\
\label{thm:IL}
Consider the N-POSS and  $C_{FB,n}(\kappa)$ of Theorem~\ref{thm_cons}. Define  the set of  information lossless strategies,
\begin{align}
&{\cal P}_{n}^{IL,\mu^u}(\kappa)\tri \big\{a_t=\mu_t^u(a^{t-1}, y^{t-1}, u_t), \forall t \in {\mathbb Z}_+^n\big|\mbox{the maps } \nonumber \\
&\mbox{$ \mu_t^u(a^{t-1}, y^{t-1}, \cdot): {\mathbb U}_t \rar a_t, \forall {\mathbb Z}_+^n$ are  bijections, with meas.} \nonumber  \\
&\mbox{inverses,   {\bf (C2)} holds,}  \; {\bf E}^{\mu^u}\{c_n(A^n, X^n)\}\leq \kappa n  \Big\}.
\label{il}
\end{align}
Similarly, define ${\cal P}_{n}^{IL,\mu^z}(\kappa)$ with $(u_t,{\mathbb U}_t)$ replaced by $(z_t,{\mathbb Z}_t)$. Then  
\begin{align}
C_{FB,n}(\kappa)=& \sup_{{\cal P}_{n}^{IL,\mu^u}(\kappa)} \sum_{t=1}^n I(U^t; Y_t|Y^{t-1})\label{di-un_a} \\
=&  \sup_{\{{\bf P}_{Z_t}\}_{t=1}^n, \: {\cal P}_{n}^{IL,\mu^z}(\kappa)} \sum_{t=1}^n I(Z^t; Y_t|Y^{t-1}). \label{di-un_b}
\end{align}
Moreover, in (\ref{di-un_a}) and (\ref{di-un_b}) ${\cal P}_{n}^{IL,\mu^u}(\kappa)$ and ${\cal P}_{n}^{IL,\mu^z}(\kappa)$ can be replaced by separated strategies (see Definition~\ref{def:is}). 
\end{theorem}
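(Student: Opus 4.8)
The plan is to prove the chain of equalities by establishing two inequalities for each characterization, leveraging the information-state reformulation already obtained in Theorem~\ref{theorem:aposteriori}(c) together with the data-processing properties of directed information. First I would recall that by Lemma~\ref{lemma_c}(a), every randomized strategy in $\mathcal{P}_n(\kappa)$ admits a representation $a_t=\mu_t^u(a^{t-1},y^{t-1},u_t)$ with $U^n$ independent of $(W^n,X_1)$ and satisfying (C2); the corresponding joint law on $(A^n,Y^n)$ is unchanged, so $I^P(A^n\to Y^n)=I^{\mu^u}(A^n\to Y^n)$. The key observation is that for an information-lossless map — i.e. when $\mu_t^u(a^{t-1},y^{t-1},\cdot)$ is a bijection with measurable inverse — the $\sigma$-algebra generated by $(A^{t-1},Y^{t-1},A_t)$ coincides with that generated by $(A^{t-1},Y^{t-1},U_t)$, hence by induction $\sigma(A^t,Y^{t-1})=\sigma(U^t,Y^{t-1})$, so $I^{\mu^u}(A^t;Y_t|Y^{t-1})=I^{\mu^u}(U^t;Y_t|Y^{t-1})$ term by term, giving $I^P(A^n\to Y^n)=\sum_{t=1}^n I(U^t;Y_t|Y^{t-1})$ on $\mathcal{P}_n^{IL,\mu^u}(\kappa)$. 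Taking suprema, $C_{FB,n}(\kappa)\ge \sup_{\mathcal{P}_n^{IL,\mu^u}(\kappa)}\sum_t I(U^t;Y_t|Y^{t-1})$ is immediate since $\mathcal{P}_n^{IL,\mu^u}(\kappa)\subseteq \mathcal{P}_n(\kappa)$ (an information-lossless map induces an admissible channel input with the same cost).

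For the reverse inequality $C_{FB,n}(\kappa)\le \sup_{\mathcal{P}_n^{IL,\mu^u}(\kappa)}\sum_t I(U^t;Y_t|Y^{t-1})$, I would take an arbitrary $\mathcal{P}_n$-strategy, pass to its $\mu^u$-representation, and argue that restricting to information-lossless maps loses no generality: any non-injective $\mu_t^u(a^{t-1},y^{t-1},\cdot)$ can be replaced by one that is a bijection onto its range without changing $P_t(da_t|a^{t-1},y^{t-1})$ — using that $U_t$ is uniform on $[0,1]$ and the quantile/Skorohod construction of Lemma~\ref{lemma_c}, one can always re-coordinatize so the map is a bijection with measurable inverse, since the conditional law of $A_t$ given the past is realized as a measurable image of a uniform RV and the randomness not used by $A_t$ can be discarded. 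Concretely, if $\mu_t^u$ collapses part of $[0,1]$, replace $(u_t,\mathbb{U}_t)$ by $(\tilde u_t,\text{range})$ carrying the pushforward law; this preserves the joint distribution of $(A^n,Y^n)$ and the cost, while making the map information-lossless, and it can only increase or keep equal the directed information since we are not enlarging the feasible set in a way that changes the objective. The $\mu^z$ version is handled identically, with the added supremum over $\{{\bf P}_{Z_t}\}$ because the driving distributions are now free; the inequality $\ge$ comes from specializing $Z_t$ to uniform, and $\le$ from Lemma~\ref{lemma_c}(b) which says any randomized strategy has a $\mu^z$-representation, so the two suprema coincide.

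Finally, for the last sentence — that $\mathcal{P}_n^{IL,\mu^u}(\kappa)$ and $\mathcal{P}_n^{IL,\mu^z}(\kappa)$ can be replaced by separated strategies of the form $a_t=\mu_t^u(\xi_t(a^{t-1},y^{t-1}),u_t)$ — I would invoke Theorem~\ref{theorem:aposteriori}: the a posteriori distribution $\xi_t={\bf P}^P_{t}(dx_t|a^{t-1},y^{t-1})$ is an information state, hence a sufficient statistic for the conditional output law ${\bf P}_{Y_t|Y^{t-1}}^P$ in \eqref{cd_o1}, so replacing the full past $(a^{t-1},y^{t-1})$ by $\xi_t$ in the maximizing strategy does not alter the value of $\sum_t I(A^t;Y_t|Y^{t-1})$; combining this with the information-lossless reduction above yields the claim. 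I expect the main obstacle to be the reverse inequality in the first paragraph of the argument — namely, making rigorous that restricting to bijective (information-lossless) maps with measurable inverses is without loss of optimality. The subtlety is measure-theoretic: one must verify the re-coordinatization preserves the Markov/conditional-independence structure (C2) and that the measurable-inverse requirement is genuinely achievable for arbitrary Borel $\mathbb{A}$, which is where the Borel-space hypothesis on $(\mathbb{A},\mathbb{Y},\mathbb{X},\mathbb{W})$ and the isomorphism theorem for standard Borel spaces enter; I would cite \cite{gihman-skorohod1979,charalambous-kourtellaris-loykaIT2015_Part_2} for the quantile construction and fill in the $\sigma$-algebra identity by the induction sketched above.
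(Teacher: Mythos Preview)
Your approach coincides with the paper's: its proof is a single sentence stating that (\ref{di-un_a}) and (\ref{di-un_b}) follow from the bijection property of the information-lossless strategies together with Theorem~\ref{theorem:aposteriori}, which is exactly the $\sigma$-algebra identity $\sigma(A^t,Y^{t-1})=\sigma(U^t,Y^{t-1})$ and the information-state reduction you spell out in detail. The measure-theoretic subtlety you flag for the reverse inequality---that restriction to bijective maps with measurable inverse is without loss---is not addressed in the paper either, so your proposal is already at least as complete as the original.
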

\vspace*{-0.3cm}
\begin{proof} (\ref{di-un_a}), (\ref{di-un_b})  follow from the bijection property of the information lossless strategies and Theorem~\ref{theorem:aposteriori}. 
\end{proof}

\vspace*{-0.3cm}
In Section~\ref{appl} we use   information states, sufficient statistics,  and separated strategies to  characterize $C_{FB,n}(\kappa)$ of the LQG-POSS.

\section{Characterization of Feedback Capacity of LQG-POSS via  Information State}
\label{appl}
We characterize $C_{FB,n}(\kappa)$ of the  LQG-POSS (\ref{LQG-1})-(\ref{LQG-7}) in several steps, using  Section~\ref{sect:AGN}, and a generalization of  the   sufficient statistic approach  in  
\cite{charalambous2020new,CDC-CK-SL:ITW2021}.\\
{\it  Step \#1. Gaussian $(A^n, X^n, Y^n)$ are optimal. }  We show the statement of  Section~\ref{sub-section:A.2}, under 2.1). 

\begin{lemma}  (Calculation of $H^P(Y_t|Y^{t-1}, A^t)$)\\
\label{lemma_POSS}
Consider  the LQG-DM  (\ref{LQG-1})-(\ref{LQG-7}). Define  $\forall t \in {\mathbb Z}_+^{n}$,  
\begin{align}
&\hat{I}_t \tri Y_t -{\bf E}\big\{Y_t\big|Y^{t-1}, A^t\big\}, \: \mbox{innovations  wrt $Y^{t-1}, A^t$}\\
&\mu_{Y_t|Y^{t-1}, A^t} \tri {\bf E}\big\{Y_t\big|Y^{t-1}, A^{t}\big\}, \: K_{Y_t|Y^{t-1}, A^t} \tri  cov\big(Y_t, Y_t\big|Y^{t-1}, A^{t}\big),\nonumber  \\
&\hat{X}_t \tri {\bf E}\big\{X_t\big|A^{t-1}, Y^{t-1}\big\}, \: \Sigma_t \tri  cov\big(X_t, X_t\big|A^{t-1}, Y^{t-1}\big).\nonumber  
\end{align}
(a) 
The solution of  the recursion of Theorem~\ref{theorem:aposteriori}.(a)   is conditionally Gaussian given by 
\begin{align}
{\bf P}^P_{t}(dx_{t}|A^{t-1}, Y^{t-1})\in G(\hat{X}_{t};\Sigma_{t}), \ \ \forall t \in {\mathbb Z}_+^n
\end{align}
where $\hat{X}_{t}$ is linear in $(A^{n-1}, Y^{t-1})$,  $\Sigma_{t}={\bf E}\big\{(X_t-\hat{X}_t)(X_t-\hat{X}_t)^T\big\}$ is  independent of $(A^{t-1}, Y^{t-1})$,  and   satisfy   recursions:\\
(i)  $\hat{X}_t$ satisfies the generalized Kalman-filter recursion,   
\begin{align}
&\hat{X}_{t+1}=F_{t} \hat{X}_{t}+B_t A_t+  M_{t}(\Sigma_t)  \hat{I}_t,  \: \hat{X}_{1}=\mu_{X_1}, \: \forall  t\in {\mathbb Z}_+^{n-1},\label{kal_fil_noise} \\
& M_{t}(\Sigma_t) \tri \Big( F_{t}  \Sigma_{t} C_{t}^T+G_{t} K_{W_{t}}N_{t}^T\Big)\Big(N_{t} K_{W_{t}}N_{t}^T+ C_{t} \Sigma_{t} C_{t}^T \Big)^{-1},\nonumber  \\
&\hat{I}_t =Y_t - C_t \hat{X}_{t}-D_t A_t= C_t\big(X_{t}- \hat{X}_{t}\big) + N_t W_t,  \label{inn_po_1} \\
& \hat{I}_t \in G(0, K_{\hat{I}_t}) \hso  \mbox{orth.  process  indep. of  $Y^{t-1}, A^t$},  \label{inn_po_2}\\
&K_{\hat{I}_t} \tri  cov(\hat{I}_t, \hat{I}_t)= C_t \Sigma_t C_t^T +N_t K_{W_t} N_t^T=K_{Y_t|Y^{t-1}, A^t},  \label{cov_in_noise} \\
& \mu_{Y_t|Y^{t-1}, A^t}=C_t \hat{X}_{t}+D_t A_t.
\end{align}
(ii)  $\Sigma_t $ satisfies the matrix DRE,   
\begin{align}
&\Sigma_{t+1}= F_{t} \Sigma_{t}F_{t}^T  + G_{t}K_{W_{t}}G_{t}^T -\Big(F_{t}  \Sigma_{t}C_{t}^T+G_{t}K_{W_{t}}N_{t}^T  \Big) \nonumber \\
& \hspace{0.1cm} . \Big(N_{t} K_{W_{t}} N_{t}^T+C_{t}  \Sigma_{t} C_{t}^T\Big)^{-1}\Big( F_{t}  \Sigma_{t}C_{t}^T+ G_{t} K_{W_{t}}N_{t}^T  \Big)^T,\nonumber\\   & \hspace{0.1cm}\hso \Sigma_t \succeq 0, \hso  \forall t\in {\mathbb Z}_+^n, \hso \Sigma_{1}=K_{X_1}\succeq 0. \label{dre_1}
\end{align}
(b) The conditional entropy   $H^P(Y_t|Y^{t-1}, A^t)$ is   given by 
\begin{align}
&H^P(Y_t|Y^{t-1}, A^t)=H(\hat{I}_t), \hso  \: \forall t \in {\mathbb Z}_+^n \\
&= \frac{1}{2} \log\big( (2\pi e)^{n_y} \det\big(C_t \Sigma_{t} C_t^T + N_t K_{W_t} N_t^T\big) \big). \label{entr_noise}
\end{align}
\end{lemma}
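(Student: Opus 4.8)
The plan is to prove (a) and (b) jointly by induction on $t$, feeding the linear--Gaussian structure (\ref{LQG-1}), (\ref{LQG-3}) into the a posteriori recursion of Theorem~\ref{theorem:aposteriori}.(a). First I would record that for the LQG-POSS the two kernels in that recursion are Gaussian: from $Y_t=D_tA_t+C_tX_t+N_tW_t$, the kernel $Q_t(dy_t|x_t,a_t)$ is $G(D_ta_t+C_tx_t,\,N_tK_{W_t}N_t^T)$ in $y_t$; and from $X_{t+1}=F_tX_t+B_tA_t+G_tW_t$ together with the identity $N_tW_t=Y_t-D_tA_t-C_tX_t$, the kernel $S_{t+1}(dx_{t+1}|y_t,x_t,a_t)$ is Gaussian in $x_{t+1}$ with mean affine in $(x_t,a_t,y_t)$ and a fixed covariance (the conditional law of $G_tW_t$ given $N_tW_t$). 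The base case is immediate: ${\bf P}_1^P(dx_1|y^0)={\bf P}_{X_1}=G(\mu_{X_1},K_{X_1})$, so $\hat X_1=\mu_{X_1}$ and $\Sigma_1=K_{X_1}$, which is deterministic.

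For the inductive step, assume ${\bf P}_t^P(dx_t|a^{t-1},y^{t-1})=G(\hat X_t,\Sigma_t)$ with $\hat X_t$ affine in $(a^{t-1},y^{t-1})$ and $\Sigma_t$ not depending on $(a^{t-1},y^{t-1})$. Substituting the three Gaussian factors into ${\bf T}_{t+1}$ and integrating out $x_t$ (a complete-the-square computation), the unnormalized measure ${\bf T}_{t+1}(y_t,a_t,\cdot)(dx_{t+1})$ is proportional to a Gaussian in $x_{t+1}$, so after normalization ${\bf P}_{t+1}^P(dx_{t+1}|a^t,y^t)$ is Gaussian; its mean is affine in $(a^t,y^t)$ and its covariance $\Sigma_{t+1}$ depends only on $\Sigma_t$ and the system matrices. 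To identify the mean recursion with (\ref{kal_fil_noise}) I would use condition (C1) — equivalently the Markov chain $(W^t,X^t)\leftrightarrow(A^{t-1},Y^{t-1})\leftrightarrow A_t$ — to conclude that conditioning additionally on $A_t$ leaves the law of $(X_t,W_t)$ unchanged, so that given $(A^t,Y^{t-1})$ one has $X_t\sim G(\hat X_t,\Sigma_t)$ and $W_t\sim G(0,K_{W_t})$ independent (since $W_t$ is independent of $(X_1,W^{t-1},U^{t-1},M)$ and hence of $X_t$ and of $(A^{t-1},Y^{t-1})$). Hence $\hat I_t=Y_t-C_t\hat X_t-D_tA_t=C_t(X_t-\hat X_t)+N_tW_t$ is the innovation, with $K_{\hat I_t}=C_t\Sigma_tC_t^T+N_tK_{W_t}N_t^T$, $cov(X_t,\hat I_t)=\Sigma_tC_t^T$, $cov(W_t,\hat I_t)=K_{W_t}N_t^T$, and the MMSE updates of ${\bf E}\{X_t|A^t,Y^t\}$ and ${\bf E}\{W_t|A^t,Y^t\}$ assemble into the gain $M_t(\Sigma_t)$. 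The covariance recursion then follows by writing $X_{t+1}-\hat X_{t+1}=(F_t-M_tC_t)(X_t-\hat X_t)+(G_t-M_tN_t)W_t$, taking covariances (again $X_t-\hat X_t\perp W_t$), and expanding with the definition of $M_t$ to obtain the DRE (\ref{dre_1}); since the right side involves only $\Sigma_t$, $\Sigma_{t+1}$ is again deterministic, closing the induction.

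For part (b), the same Gaussianity plus (C1) give that $Y_t$ conditioned on $(Y^{t-1},A^t)$ is $G(C_t\hat X_t+D_tA_t,\,K_{\hat I_t})$ with $K_{\hat I_t}$ deterministic; taking the differential entropy of this $n_y$-dimensional Gaussian and averaging (the integrand being constant) gives $H^P(Y_t|Y^{t-1},A^t)=\tfrac12\log\big((2\pi e)^{n_y}\det(C_t\Sigma_tC_t^T+N_tK_{W_t}N_t^T)\big)$. Because the conditional law is the same for every realization of $(Y^{t-1},A^t)$, the innovation $\hat I_t$ is in fact independent of $(Y^{t-1},A^t)$ with $\hat I_t\in G(0,K_{\hat I_t})$, so this equals $H(\hat I_t)$; the assumption $N_tK_{W_t}N_t^T\succ0$ in (\ref{LQG-5}) ensures $K_{\hat I_t}\succ0$, hence the entropy is finite and the formula well posed.

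I expect the main obstacle to be the assertion that $\Sigma_t$ (and likewise $K_{\hat I_t}$) does not depend on $(A^{t-1},Y^{t-1})$. Unlike the classical non-feedback Gaussian case, $(X_t,A^{t-1},Y^{t-1})$ is \emph{not} jointly Gaussian here, since the messages $M$ and the randomization variables $U^n$ (or $Z^n$) are arbitrary; the deterministic character of $\Sigma_t$ is recovered only because (C1)/(C2) makes $A_t$ conditionally independent of $(W^t,X^t)$ given the feedback data, so that $A^t$ behaves as a \emph{known} input in the conditional filtering problem and injects no information about the noise. The accompanying bookkeeping — keeping straight that $\hat X_t,\Sigma_t$ are conditioned on $\{A^{t-1},Y^{t-1}\}$ while $\hat I_t,\mu_{Y_t|Y^{t-1},A^t},K_{Y_t|Y^{t-1},A^t}$ are conditioned on $\{A^t,Y^{t-1}\}$, and invoking (C1) at exactly the points where the extra $A_t$ must be absorbed — is the part that needs care; the surrounding Gaussian algebra is routine.
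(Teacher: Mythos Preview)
Your proposal is correct and follows essentially the same route as the paper: the paper's proof consists of the single sentence that all statements follow by verifying that the Gaussian density $G(\hat X_t,\Sigma_t)$ satisfies the a posteriori recursion (\ref{apost_1}), which is precisely the induction you spell out. Your treatment is in fact more careful than the paper's, since you isolate the role of condition {\bf (C1)} in making $A_t$ a ``known input'' so that $\Sigma_t$ is deterministic despite $(X_t,A^{t-1},Y^{t-1})$ not being jointly Gaussian --- a point the paper leaves implicit.
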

\begin{proof} All statements follow by verifying that  ${\bf P}^P_{t}(dx_{t}|A^{t-1}, Y^{t-1})=\frac{1}{(2\pi)^{n_x/2}(\det(\Sigma_t))^{1/2}}\exp\big(-\frac{1}{2}(x_t- \hat{X}_t)\Sigma_t^{-1}(x_t- \hat{X}_t)^T\big), \forall t\in {\mathbb Z}_+^n$   satisfies recursion (\ref{apost_1}).
%
\end{proof}


\vspace*{-0.2cm}

\begin{lemma} (Optimality of Gaussian inputs)\\
\label{lemma_Gaussian} 
Consider  the LQG-POSS  (\ref{LQG-1})-(\ref{LQG-7}). 
The optimal  $P(\cdot|\cdot)  \in  {\cal P}_n(\kappa)$ which maximizes $I(A^n\rar Y^{n}) $  is induced by the realization of  jointly Gaussian  $A^n$,  given by 
\begin{align}
&A_t= \mu_t^z(Y^{t-1},A^{t-1},  Z_t)=\mu_t^0(A^{t-1}, Y^{t-1}) +Z_t, \label{rec_2_n} \\
&\mu_t^0(A^{ t-1}, Y^{t-1}) \hso \mbox{is linear in the RVs $(A^{t-1},  Y^{t-1})$}, \\
&Z_t \in G(0, K_{Z_{t}}), \; K_{Z_{t}}\succeq 0, \forall t\in {\mathbb Z}_+^n  \hso \mbox{indep. proc.} \\
&Z_i \hso \mbox{indep. of} \hso (X^t, W^t,A^{t-1}, Y^{t-1}), \; \forall t\in {\mathbb Z}_+^n.  \label{rec_2_nn} 
\end{align}
\end{lemma}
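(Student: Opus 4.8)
The plan is to reduce the maximization of $I^P(A^n\rar Y^n)$ over ${\cal P}_n(\kappa)$ to a Gaussian maximum-entropy problem and then to Gaussianize any feasible input, working in the innovations coordinates supplied by Lemma~\ref{lemma_POSS}. First I would write, using $I^P(A^t;Y_t|Y^{t-1})=H^P(Y_t|Y^{t-1})-H^P(Y_t|Y^{t-1},A^t)$ and the chain rule for differential entropy, $I^P(A^n\rar Y^n)=H^P(Y^n)-\sum_{t=1}^n H^P(Y_t|Y^{t-1},A^t)$. By Lemma~\ref{lemma_POSS}(b) the subtracted sum equals $\sum_t\tfrac12\log\big((2\pi e)^{n_y}\det(C_t\Sigma_tC_t^T+N_tK_{W_t}N_t^T)\big)$, a constant $c_n$ over all $P\in{\cal P}_n(\kappa)$ because $\Sigma_t$ obeys the input-free filtering DRE (\ref{dre_1}); hence $C_{FB,n}(\kappa)=\sup_{P\in{\cal P}_n(\kappa)}H^P(Y^n)-c_n$. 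Since $R_t\succ 0$ and $Q_t\succeq 0$, the cost constraint bounds the covariances of $A^n$ and thence of $X^n,Y^n$, so $H^P(Y^n)\le\tfrac12\log\big((2\pi e)^{nn_y}\det K^P_{Y^n}\big)$ with $K^P_{Y^n}\tri\mathrm{cov}^P(Y^n)$, equality iff $Y^n$ is Gaussian.

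Next I would pass to the coordinates of Lemma~\ref{lemma_POSS}(a): for every $P\in{\cal P}_n(\kappa)$, $Y_t=C_t\hat X_t+D_tA_t+\hat I_t$ with $\hat X_{t+1}=F_t\hat X_t+B_tA_t+M_t(\Sigma_t)\hat I_t$, $\hat X_1=\mu_{X_1}$, and $\hat I_t\in G(0,K_{\hat I_t})$ independent of $(A^t,Y^{t-1})$. Consequently $\{\hat I_t\}$ is a fixed mutually independent Gaussian sequence whose law does not depend on $P$, $\sigma(A^{t-1},Y^{t-1})=\sigma(A^{t-1},\hat I^{t-1})$, and $(\hat X_t,Y_t)$ are affine functions of $(A^t,\hat I^t)$; moreover ${\bf E}^P\{c_n(A^n,X^n)\}={\bf E}^P\{\sum_t(\langle A_t,R_tA_t\rangle+\langle\hat X_t,Q_t\hat X_t\rangle)\}+\sum_t\mathrm{tr}(Q_t\Sigma_t)$, the cross term $\langle\hat X_t,Q_t(X_t-\hat X_t)\rangle$ being zero-mean. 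So the problem becomes: maximize $H(Y^n)$ for the linear-Gaussian reduced channel with fixed noise $\hat I^n$, over feedback inputs satisfying (C1) -- which here reads $A_t\perp\hat I_t\mid(A^{t-1},\hat I^{t-1})$ -- subject to an affine-quadratic constraint on the second moments of $(A^n,\hat I^n)$.

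Given any $P\in{\cal P}_n(\kappa)$ I would then take $(\tilde A^n,\tilde{\hat I}^n)$ jointly Gaussian with the same mean and covariance as $(A^n,\hat I^n)$ under $P$ (consistent, since that matrix is PSD and its $\hat I$-block is already $\mathrm{diag}(K_{\hat I_t})$), build $\tilde{\hat X}^n,\tilde Y^n$ from the same affine recursions, and extend to a full LQG-POSS realization in ${\cal P}_n(\kappa)$. Then $\mathrm{cov}(\tilde Y^n)=K^P_{Y^n}$ and ${\bf E}^{\tilde P}\{c_n\}={\bf E}^P\{c_n\}\le\kappa n$; (C1) survives because, under any feasible law, $\hat I_t\perp\sigma(A^{t-1},\hat I^{t-1})$ and $A_t\perp\hat I_t\mid(A^{t-1},\hat I^{t-1})$ is an equality of Gaussian conditional cross-covariances, hence inherited by $\tilde P$, so $\tilde P\in{\cal P}_n(\kappa)$. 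Writing $\mu_t^0(\tilde A^{t-1},\tilde Y^{t-1})\tri{\bf E}^{\tilde P}\{\tilde A_t\mid\tilde A^{t-1},\tilde Y^{t-1}\}$ (affine, by Gaussianity) and $Z_t\tri\tilde A_t-\mu_t^0(\tilde A^{t-1},\tilde Y^{t-1})$, Gaussianity together with (C1) force $Z_t\in G(0,K_{Z_t})$ independent of $(\tilde X^t,\tilde W^t,\tilde A^{t-1},\tilde Y^{t-1})$ and $Z^n$ an independent sequence, i.e., exactly the form (\ref{rec_2_n})-(\ref{rec_2_nn}). Finally $H^{\tilde P}(Y^n)=\tfrac12\log((2\pi e)^{nn_y}\det K^P_{Y^n})\ge H^P(Y^n)$ by the bound in Step 1, so $I^{\tilde P}(A^n\rar Y^n)\ge I^P(A^n\rar Y^n)$ and the supremum is attained within the stated Gaussian class.

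The step I expect to be the main obstacle is the preservation of (C1) under Gaussianization. Matching $\mathrm{cov}^P(A^n,X_1,W^n)$ directly does not work: the (generally nonlinear) conditional mean ${\bf E}^P\{A_t\mid A^{t-1},Y^{t-1}\}$ is correlated with $(W^{t-1},X^{t-1})$ through the feedback loop, so the resulting Gaussian input would leak the state and violate (C1). The resolution is precisely the change of coordinates to $(\hat X_t,A_t,\hat I_t)$ -- there is no hidden state, the residual noise $\hat I^n$ is feedback-independent of the past inputs, and the conditional cross-covariances that (C1) constrains genuinely vanish and remain zero after covariance matching. The remaining ingredients -- the entropy decomposition, the Gaussian maximum-entropy bound, and the quadratic-cost identity with offset $\sum_t\mathrm{tr}(Q_t\Sigma_t)$ -- are routine.
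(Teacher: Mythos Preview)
Your proposal is correct and follows the same approach as the paper: decompose $I(A^n\rar Y^n)=H(Y^n)-\sum_t H(\hat I_t)$ via Lemma~\ref{lemma_POSS}(b), observe the subtracted term is input-independent and the cost is quadratic in second moments, then invoke the Gaussian maximum-entropy principle and the realization of Lemma~\ref{lemma_c}(b). The paper's proof is a two-line sketch that omits the feasibility check you carry out---namely, that Gaussianizing a feasible strategy preserves condition~{\bf (C1)}---and your passage to the innovations coordinates $(\hat X_t,A_t,\hat I_t)$ to turn~{\bf (C1)} into vanishing unconditional cross-covariances (hence inherited by the covariance-matched Gaussian law) is exactly the right way to fill that gap.
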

\vspace*{-0.6cm}
\begin{proof} Follows from Lemma~\ref{lemma_POSS}.(b), by the maximum entropy principle, because  $I(A^n\rar Y^n) =H(Y^n)-\sum_{t=1}^n H(\hat{I}_t)$, and 
 the average cost constraint is quadratic.
By Lemma~\ref{lemma_c}.(b),  one  realization of $A^n$ is (\ref{rec_2_n})-(\ref{rec_2_nn}). 
\end{proof}

\vspace*{-0.3cm}

{\it Step \#2. Equivalent Characterization  of  $ C_{FB,n}(\kappa)$ via  Information State and Sufficient Statistics.}  
First, we invoke Lemma~\ref{lemma_POSS},  to identify a sufficient statistic for ${\cal P}_n(\kappa)$.

\begin{lemma} Preliminary  char. of  ${C}_{FB,n}(\kappa)$ for LQG-POSSs  \\ 
\label{thm_SS_P}
Consider  the LQG-POSS  (\ref{LQG-1})-(\ref{LQG-7}). 
 Then
\begin{align}
&Y_t= C_t{\hat{X}}_{t} +D_t A_t +\hat{I}_t, \hso  t=1, \ldots, n,  \label{cp_16_alt_n_P}\\
& {\bf P}_t(dy_t|y^{t-1},a^t)= {\bf P}_t(dy_t|a_t, \hat{x}_t),  \label{PO_tr_1_P_a} \\
& {\bf P}_t(dy_t|y^{t-1})=\int {\bf P}_t(dy |a_t,\hat{x}_{t}){\bf P}_t(da_t|\hat{x}_t,y^{t-1})\nonumber \\
&\hso . {\bf P}_t(d\hat{x}_t|y^{t-1}), 
 \label{PO_tr_1_P}\\
&P_t(da_t|a^{t-1}, y^{t-1})={\bf P}_t(da_t|\hat{x}_{t}, y^{t-1}) \label{PO_tr_2_b_P} \\
&A_t=\Gamma_t^1 \hat{X}_t + U_t(Y^{t-1}) + Z_t,\hso  U_t(Y^{t-1})\tri  \Gamma_{t}^2 Y^{t-1},  \label{cp_16_alt_n_P_aa_n} \\
&  Z_t \in G(0, K_{Z_t}) \hso \mbox{indep. of} \hso (X^{t}, \hat{X}^t, \hat{I}^t, A^{t-1}, Y^{t-1}). 
\end{align}
where $(\Gamma_t^1, \Gamma_t^2)$ are  nonrandom. 
An equivalent 
${C}_{FB,n}(\kappa)$     is 
\begin{align}
&{C}_{FB, n}(\kappa)= \sup_{{\cal P}_{n}^{\hat{X}}(\kappa)}\sum_{t=1}^n I(A_t, \hat{X}_t; Y_t|Y^{t-1}), \label{ftfic_is_P}  \\
&{\cal P}_{n}^{\hat{X}}(\kappa) \tri \Big\{ \{{\bf P}_t(da_t|\hat{x}_{t}, y^{t-1})\}_{t=1}^n\Big|  \frac{1}{n} {\bf E}\big( \sum_{t=1}^n \hat{\gamma}_t(A_t, \hat{X}_t)\big) \leq \kappa    \Big\} \nonumber\\
& \hat{\gamma}_t(A_t, \hat{X}_t) \tri \langle A_t, R_{t} A_t \rangle+ \langle \hat{X}_t, Q_{t} \hat{X}_t \rangle + trace\big(Q_t \Sigma_t\big).   
\label{PO_tr_3_P_a}  
\end{align}
\end{lemma}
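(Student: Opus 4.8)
## Proof Proposal for Lemma~\ref{thm_SS_P}

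The plan is to build the claim in three stages: (i) reduce the output and channel-transition laws to functions of the sufficient statistic $\hat X_t$, (ii) argue that the optimal input distribution at time $t$ depends on the past only through $(\hat X_t, Y^{t-1})$, and (iii) rewrite the directed-information objective and the quadratic cost in terms of the reduced statistic. For (i), I would start from Lemma~\ref{lemma_POSS}.(a)--(b): the innovations decomposition $Y_t = C_t\hat X_t + D_t A_t + \hat I_t$ is immediate from \eqref{inn_po_1}, and since $\hat I_t \in G(0,K_{\hat I_t})$ is orthogonal to and independent of $(Y^{t-1},A^t)$ with $K_{\hat I_t}$ nonrandom (it depends only on the deterministic $\Sigma_t$), the conditional law ${\bf P}_t(dy_t|y^{t-1},a^t)$ is Gaussian with mean $C_t\hat X_t + D_t A_t$ and covariance $K_{\hat I_t}$; hence it is a measurable function of $(a_t,\hat x_t)$ only, giving \eqref{PO_tr_1_P_a}. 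The marginal \eqref{PO_tr_1_P} then follows by the tower property, integrating out $A_t$ and $\hat X_t$ against the input distribution ${\bf P}_t(da_t|\hat x_t,y^{t-1})$ and the filter law ${\bf P}_t(d\hat x_t|y^{t-1})$.

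For (ii), the key point is that $(\hat X_t)$ is an information state: by Theorem~\ref{theorem:aposteriori}.(a) the a~posteriori law ${\bf P}^P_t(dx_t|a^{t-1},y^{t-1})$ is an information state, and in the Gaussian case Lemma~\ref{lemma_POSS}.(a) shows it is parametrized entirely by $\hat X_t$ (since $\Sigma_t$ is a fixed deterministic sequence). Using Theorem~\ref{theorem:aposteriori}.(c), the supremum in $C_{FB,n}(\kappa)$ can be taken over separated strategies $a_t = \mu_t^z(\xi_t(a^{t-1},y^{t-1}),z_t)$ with $\xi_t \equiv {\bf P}^P_t(dx_t|a^{t-1},y^{t-1})$, which in the Gaussian setting means $a_t$ is a function of $(\hat X_t, y^{t-1}, z_t)$ — yielding \eqref{PO_tr_2_b_P}. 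Combining this with Lemma~\ref{lemma_Gaussian} (optimal inputs are jointly Gaussian of the additive form $\mu_t^0(A^{t-1},Y^{t-1}) + Z_t$ with $Z_t$ independent Gaussian), and observing that any jointly Gaussian function of $(\hat X_t, Y^{t-1})$ can be written as an affine map, I get the affine realization \eqref{cp_16_alt_n_P_aa_n}: $A_t = \Gamma_t^1 \hat X_t + \Gamma_t^2 Y^{t-1} + Z_t$ with nonrandom $(\Gamma_t^1,\Gamma_t^2)$ and $Z_t$ independent of $(X^t,\hat X^t,\hat I^t,A^{t-1},Y^{t-1})$. (One must check that restricting $\mu_t^0$ to depend on $Y^{t-1}$ rather than $A^{t-1}$ as well loses nothing: since $A^{t-1}$ is itself built from $(\hat X^{t-1}, Y^{t-2}, Z^{t-1})$ and $\hat X_t$ summarizes $(A^{t-1},Y^{t-1})$ for the filter, the dependence on $A^{t-1}$ collapses into $\hat X_t$ and $Y^{t-1}$.)

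For (iii), I would rewrite the per-step mutual information: $I(A^t;Y_t|Y^{t-1}) = H(Y_t|Y^{t-1}) - H(Y_t|Y^{t-1},A^t) = H(Y_t|Y^{t-1}) - H(\hat I_t)$ by Lemma~\ref{lemma_POSS}.(b), and since $Y_t$ given $Y^{t-1}$ depends on $A^t$ only through $(A_t,\hat X_t)$ via \eqref{PO_tr_1_P_a}, we have $I(A^t;Y_t|Y^{t-1}) = I(A_t,\hat X_t;Y_t|Y^{t-1})$ — here one uses that conditionally on $(A_t,\hat X_t,Y^{t-1})$ the variable $Y_t$ is independent of the rest of $A^t$, so the extra conditioning adds nothing. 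Summing over $t$ gives \eqref{ftfic_is_P}. For the cost, I would take conditional expectations: ${\bf E}\{\langle X_t, Q_t X_t\rangle\} = {\bf E}\{\langle \hat X_t, Q_t \hat X_t\rangle\} + \mathrm{trace}(Q_t\Sigma_t)$ by the orthogonality $X_t - \hat X_t \perp \hat X_t$ and $\Sigma_t = {\bf E}\{(X_t-\hat X_t)(X_t-\hat X_t)^T\}$, which converts the original constraint \eqref{LQG-6}--\eqref{LQG-7} into the $\hat\gamma_t$ constraint \eqref{PO_tr_3_P_a} defining ${\cal P}_n^{\hat X}(\kappa)$. The main obstacle is the rigorous justification in step (ii) that the optimal input's dependence on $A^{t-1}$ can be absorbed into $(\hat X_t, Y^{t-1})$ without shrinking the feasible set or the attainable directed information — i.e., that the reduced class ${\cal P}_n^{\hat X}(\kappa)$ is genuinely equivalent (not merely contained in) the original class; this requires carefully tracking that the filter recursion \eqref{kal_fil_noise} makes $\hat X_t$ a deterministic function of $(A^{t-1},Y^{t-1})$ so that conditioning is consistent in both directions, and that the Gaussian/maximum-entropy optimality from Lemma~\ref{lemma_Gaussian} already forces the linear-in-$(\hat X_t,Y^{t-1})$ structure.
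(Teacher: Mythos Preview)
Your proposal is correct and follows essentially the same route as the paper: the innovations identity from Lemma~\ref{lemma_POSS} gives \eqref{cp_16_alt_n_P}--\eqref{PO_tr_1_P}, reconditioning on $(A^t,Y^{t-1})$ together with the Markov chain $X_t\leftrightarrow(A^{t-1},Y^{t-1})\leftrightarrow A_t$ gives the cost rewrite \eqref{PO_tr_3_P_a}, and the input reduction \eqref{PO_tr_2_b_P} comes from the information-state structure. The only difference in emphasis is in your step~(ii): where you invoke Theorem~\ref{theorem:aposteriori}.(c) on separated strategies, the paper instead records the filter Markov property ${\bf P}_{\hat X_t\mid \hat X^{t-1},A^{t-1}}={\bf P}_{\hat X_t\mid \hat X_{t-1},A_{t-1}}$ and appeals directly to Markov decision theory, which is the cleaner way to justify retaining $Y^{t-1}$ (not only $\hat X_t$) in the reduced input law---precisely the point you correctly flag as the main obstacle.
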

\vspace*{-0.3cm}
\begin{proof} See Section~\ref{app:thm_SS_P}. 
\end{proof}

\vspace*{-0.4cm}
{\it Step \#3. Complete Characterization of  $C_{FB,n}(\kappa)$. } Now, we show  the statements  of Section~\ref{sub-section:A.2},  under  2.2).  We use Lemma~\ref{thm_SS_P} to  find another equivalent $Y^n$  to compute ${\bf P}_{Y_t|Y^{t-1}}$, which is required in the optimization of  $H(Y^n)$. 

%

\begin{theorem} Complete char.  of  ${C}_{FB,n}(\kappa)$ for LQG-POSSs   \\ 
\label{thm_SS}
Consider  the LQG-POSS  (\ref{LQG-1})-(\ref{LQG-7}).  
Define  
\begin{align}
I_t\tri & Y_t - {\bf E}\big\{Y_t\big|Y^{t-1}\big\},\: \mbox{$Y^{t}$ generated by (\ref{cp_16_alt_n_P})}, \\ 
\widehat{\hat{X}}_{t} \tri& {\bf E}\big\{\hat{X}_t\Big|Y^{t-1}\big\},  \; \forall t \in {\mathbb Z}_+^n, \; 
\widehat{\hat{X}}_1 \tri \mu_{X_1}, \; K_1 \tri 0,  \nonumber 
\\
K_{t} \tri & cov\big(\hat{X}_t,\hat{X}_t\Big|Y^{t-1}\big)= {\bf E}\big\{\big(\hat{X}_t- \widehat{\hat{X}}_{t}\big)\big(\hat{X}_t-\widehat{\hat{X}}_{t} \big)^T\big|Y^{t-1}\big\}.\nonumber 
\end{align}
(a) An  equivalent  representation of $(A^n, Y^n)$ is
\begin{align}
&Y_t =\big(C_t+ D_t \Gamma_t^1\big) \widehat{\hat{X}}_t +D_t U_t(Y^{t-1})  + {I}_t, \hso \forall t \in {\mathbb Z}_+^n,    \label{cp_16_alt_inn}  \\
&A_t=\Gamma_t^1 \hat{X}_t + U_t(Y^{t-1}) + Z_t, \hso    U_t(Y^{t-1})\tri  \Gamma_{t}^2 Y^{t-1},    \label{cp_16_alt_n_P_aa} \\
&  Z_t \in G(0, K_{Z_t}) \hso \mbox{ind. of} \hso (X^{t}, \hat{X}^t, \widehat{\hat{X}}_t,  \hat{I}^t, A^{t-1}, Y^{t-1}) 
\end{align}
where $\widehat{\hat{X}}_t$ and $K_t$ are given under (i) and (ii) below. \\
(i)   $\widehat{\hat{X}}_t$ satisfies the Kalman-filter recursion,
\begin{align}
&\widehat{\hat{X}}_{t+1}=F_{t}(\Gamma_t^1) \widehat{\hat{X}}_{t}+ B_t U_t\nonumber \\
&+F_{t}^{CL}(\Sigma_{t}, K_t, \Gamma_t^1)  I_t, \; \widehat{\hat{X}}_{1}=\mu_{X_1},\hso \forall t \in {\mathbb Z}_+^{n-1}, \label{kf_m_1} \\
& F_t(\Gamma_t^1)\tri F_t +B_t \Gamma_t^1,    \;  C_t(\Gamma_t^1) \tri C_t + D_t\Gamma_t^1 ,   \\
&F_{t}^{CL}(\Sigma_{t}, K_t, \Gamma_t^1) \tri \Big(F_{t}(\Gamma_t^1)  K_{t}\big(C_t(\Gamma_t^1) \big)^T+ B_t K_{Z_t}D^T  \\
&+  M_{t}(\Sigma_{t}) K_{\hat{I}_{t}} \Big)\big\{K_{\hat{I}_t}+ D_t  K_{Z_t}D_t^T + \big(C_t (\Gamma_t^1) \big) K_{t} \big(C_t(\Gamma_t^1) \big)^T \big\}^{-1} ,  \nonumber \\
&I_t = Y_t-C_t\widehat{\hat{X}}_{t}- D_t \Gamma_t^1 \widehat{\hat{X}}_{t}-D_t U(Y^{t-1})     \nonumber\\ &=  C_t(\Gamma_t^1) \Big(\hat{X}_t-\widehat{\hat{X}}_{t}\Big)+ \hat{I}_t+ D_t Z_t, \hso \forall t \in {\mathbb Z}_+^n, \label{kf_m_2} \\
&I_t \in  G(0; K_{I_t}), \; \forall t \in {\mathbb Z}_+^n \; \mbox{orthogonal innovations proc.}\nonumber \\
&K_{I_t}=K_{Y_t|Y^{t-1}} \tri cov\big(I_t,I_t\big)=  \Big(C_t(\Gamma_t^1)\Big)K_t \Big(C_t(\Gamma_t^1)\Big)^T \nonumber \\
&+ K_{\hat{I}_t} + D_tK_{Z_t}D_t^T,  \hso
K_{\hat{I}_t}  \hso  \mbox{given by  (\ref{cov_in_noise})}. \label{inno_PO}
\end{align}
(ii)  $K_t = {\bf E}\big\{\widehat{E}_t \widehat{E}_t^T\big\}$, $\widehat{E}_t = \hat{X}_{t}- \widehat{\hat{X}}_{t}$ satisfies the matrix DRE 
\begin{align}
&K_{t+1}= F_t(\Gamma_t^1) K_{t}\big(F_t(\Gamma_t^1\big)^T  + M_t(\Sigma_{t})K_{\hat{I}_t}\big(M_t(\Sigma_{t})\big)^T  +B_t K_{Z_t} B_t^T    \nonumber  \\
& -\Big(F_t(\Gamma_t^1)  K_{t}\big(C_t (\Gamma_t^1) \big)^T   +B_t K_{Z_t} D_t^T  + M_t(\Sigma_t)K_{\hat{I}_t}   \Big) \nonumber \\
&.  \Big( K_{\hat{I}_t}+D_t K_{Z_t} D_t^T
+ \big(C_t (\Gamma_t^1 \big) K_{t} \big(C_t( \Gamma_t^1) \big)^T     \Big)^{-1}  \Big(  F_t(\Gamma_t^1)  K_{t}\big(C_t (\Gamma_t^1)  \big)^T\nonumber\\ 
&+ B_t K_{Z_t}B_t^T+ M_t(\Sigma_t)K_{\hat{I}_t}      \Big)^T, \: K_t \succeq 0, 
 \forall t \in {\mathbb Z}_+^n, \; K_1=0. \label{kf_m_4_a}  
\end{align} 
(b) An equivalent characterization of ${C}_{FB,n}(\kappa)$     is 
\begin{align}
&{C}_{FB,n}(\kappa)= \sup_{{\cal P}_{n}^{\widehat{\hat{X}}}(\kappa)}\sum_{t=1}^n I(A_t, \hat{X}_t; Y_t|Y^{t-1}), \label{ftfic_is}  \\
&I(A_t, \hat{X}_t; Y_t|Y^{t-1})= \frac{1}{2}  \log \frac{ \det( K_{I_t})   }{\det(K_{\hat{I}_t})}= \label{ftfic_is_aa}    \\
&\frac{1}{2} \log \Big\{ \frac{ \det \Big( C_t( \Gamma_t^1)K_t (C_t (\Gamma_t^1)^T + K_{\hat{I}_t} + D_tK_{Z_t}D_t^T \Big)    }{\det\big(K_{\hat{I}_t}\big)}\Big\},   \label{ftfic_is_a}     
\end{align}
\begin{align}
&{\cal P}_{n}^{\widehat{\hat{X}}}(\kappa) \tri \Big\{ \{ U_t, \Gamma_t^1, K_{Z_t} \}_{t=1}^n\Big|  \frac{1}{n} {\bf E}\big( \sum_{t=1}^n \widehat{\hat{\gamma}}_t(U_t,\widehat{\hat{X}}_t, \Gamma_t^1, K_{Z_t})\big) \leq \kappa    \Big\} \nonumber\\
& \widehat{\hat{\gamma}}_t(U_t,\widehat{\hat{X}}_t, \Gamma_t^1 K_{Z_t}, ) \tri \langle U_t, R_{t} U_t \rangle+ \langle \widehat{\hat{X}}_t, Q_{t}(\Gamma_t^1) \widehat{\hat{X}}_t \rangle +\langle \widehat{\hat{X}}_t, L_t(\Gamma_t^1)U_t \rangle   \nonumber \\
& +\langle L_t(\Gamma_t^1)U_t, \widehat{\hat{X}}_t \rangle   
+   trace \Big(Q_t \Sigma_t+ Q_t(\Gamma_t^1) K_t + R_t K_{Z_t}\Big),\label{PO_tr_3}  \\
& Q_t(\Gamma_t^1)\tri Q_t+ (\Gamma_t^1)^T R_t \Gamma_t^1, \hso L_t(\Gamma_t^1)\tri (\Gamma_t^1)^T R_t. 
\label{PO_tr_3_a}  
\end{align}
\end{theorem}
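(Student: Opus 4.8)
The plan is to iterate the construction of Lemma~\ref{thm_SS_P} one filtering layer further, passing from the generalized Kalman filter $\hat{X}_t={\bf E}\{X_t|A^{t-1},Y^{t-1}\}$ of Lemma~\ref{lemma_POSS} to $\widehat{\hat{X}}_t={\bf E}\{\hat{X}_t|Y^{t-1}\}$, and showing that $(\widehat{\hat{X}}_t,K_t)$ is a finite-dimensional sufficient statistic for the quantities entering $C_{FB,n}(\kappa)$. First I would insert the optimal input $A_t=\Gamma_t^1\hat{X}_t+U_t(Y^{t-1})+Z_t$ of Lemma~\ref{thm_SS_P} into $Y_t=C_t\hat{X}_t+D_tA_t+\hat{I}_t$ and into the filter recursion~\eqref{kal_fil_noise}; with $C_t(\Gamma_t^1)=C_t+D_t\Gamma_t^1$ and $F_t(\Gamma_t^1)=F_t+B_t\Gamma_t^1$ this produces the closed-loop relations $Y_t=C_t(\Gamma_t^1)\hat{X}_t+D_tU_t(Y^{t-1})+D_tZ_t+\hat{I}_t$ and $\hat{X}_{t+1}=F_t(\Gamma_t^1)\hat{X}_t+B_tU_t+B_tZ_t+M_t(\Sigma_t)\hat{I}_t$. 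Writing $\hat{X}_t=\widehat{\hat{X}}_t+\widehat{E}_t$ with $\widehat{E}_t\tri\hat{X}_t-\widehat{\hat{X}}_t$ in the first of these and regrouping yields~\eqref{cp_16_alt_inn}.

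Next I would establish that $(X^n,\hat{X}^n,A^n,Y^n)$ is jointly Gaussian---this follows because $(X_1,W^n,Z^n)$ are Gaussian and every map in the closed loop is affine in these and in $Y^{t-1}$---so that $\widehat{\hat{X}}_t$ is affine in $Y^{t-1}$ and $K_t=cov(\hat{X}_t,\hat{X}_t|Y^{t-1})={\bf E}\{\widehat{E}_t\widehat{E}_t^T\}$ is nonrandom. I would record the orthogonality relations used repeatedly: $\widehat{E}_t$ is orthogonal to every square-integrable $Y^{t-1}$-measurable variable, in particular to $\widehat{\hat{X}}_t$ and to $U_t(Y^{t-1})$; $\hat{I}_t$ is independent of $\sigma(A^{t-1},Y^{t-1})$, hence orthogonal to $\hat{X}_t,\widehat{\hat{X}}_t,\widehat{E}_t$; and $Z_t$ is independent of all variables generated up to time $t$. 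Projecting the $Y_t$-relation onto $Y^{t-1}$ then gives $I_t\tri Y_t-{\bf E}\{Y_t|Y^{t-1}\}=C_t(\Gamma_t^1)\widehat{E}_t+\hat{I}_t+D_tZ_t$, which is~\eqref{kf_m_2}, and mutual orthogonality of the three summands gives $K_{I_t}=C_t(\Gamma_t^1)K_tC_t(\Gamma_t^1)^T+K_{\hat{I}_t}+D_tK_{Z_t}D_t^T$, which is~\eqref{inno_PO}.

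I would then project $\hat{X}_{t+1}$ onto $Y^t=\{Y^{t-1},I_t\}$. From the closed-loop recursion, ${\bf E}\{\hat{X}_{t+1}|Y^{t-1}\}=F_t(\Gamma_t^1)\widehat{\hat{X}}_t+B_tU_t$, and, using the orthogonalities above, $cov(\hat{X}_{t+1},I_t)=F_t(\Gamma_t^1)K_tC_t(\Gamma_t^1)^T+B_tK_{Z_t}D_t^T+M_t(\Sigma_t)K_{\hat{I}_t}$, which equals exactly $F_t^{CL}(\Sigma_t,K_t,\Gamma_t^1)K_{I_t}$; the standard conditional-Gaussian (Kalman) update then gives the recursion~\eqref{kf_m_1}. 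Likewise $cov(\hat{X}_{t+1},\hat{X}_{t+1}|Y^{t-1})=F_t(\Gamma_t^1)K_tF_t(\Gamma_t^1)^T+B_tK_{Z_t}B_t^T+M_t(\Sigma_t)K_{\hat{I}_t}M_t(\Sigma_t)^T$, and $K_{t+1}=cov(\hat{X}_{t+1},\hat{X}_{t+1}|Y^{t-1})-F_t^{CL}K_{I_t}(F_t^{CL})^T$ is the DRE~\eqref{kf_m_4_a}, with $K_1=0$ since $\hat{X}_1=\mu_{X_1}$ is deterministic. An induction on this projection structure shows that $I^n$ is an orthogonal innovations sequence with covariances $K_{I_t}$ and that $\{Y^{t-1}\}$ and $\{\widehat{\hat{X}}_t,I^t\}$ generate the same $\sigma$-algebra, so the affine system~\eqref{cp_16_alt_inn}--\eqref{cp_16_alt_n_P_aa} reproduces the true laws $\{{\bf P}_{Y_t|Y^{t-1}}\}$; this proves part~(a).

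For part~(b) I would write $I(A_t,\hat{X}_t;Y_t|Y^{t-1})=H(Y_t|Y^{t-1})-H(Y_t|Y^{t-1},A_t,\hat{X}_t)$; by~\eqref{PO_tr_1_P_a} and the independence of $\hat{I}_t$ from $(Y^{t-1},A^t)$ the second term equals $H(\hat{I}_t)=\frac{1}{2}\log((2\pi e)^{n_y}\det(K_{\hat{I}_t}))$ of Lemma~\ref{lemma_POSS}, while the first equals $H(I_t)=\frac{1}{2}\log((2\pi e)^{n_y}\det(K_{I_t}))$ because $I_t$ is Gaussian with the nonrandom covariance obtained above; subtracting and inserting $K_{I_t}$ gives~\eqref{ftfic_is_aa}--\eqref{ftfic_is_a}. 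For the cost I would substitute $A_t=\Gamma_t^1\hat{X}_t+U_t+Z_t$ into $\hat{\gamma}_t$ of~\eqref{PO_tr_3_P_a}, take expectation, and use $Z_t\perp(\hat{X}_t,U_t)$ together with $\hat{X}_t=\widehat{\hat{X}}_t+\widehat{E}_t$, $\widehat{E}_t\perp(\widehat{\hat{X}}_t,U_t)$, ${\bf E}\{\widehat{E}_t\widehat{E}_t^T\}=K_t$; collecting the deterministic trace contributions yields~\eqref{PO_tr_3}--\eqref{PO_tr_3_a}, and since by Lemma~\ref{lemma_Gaussian} the optimizing kernels ${\bf P}_t(da_t|\hat{x}_t,y^{t-1})$ are precisely those parametrized by $(\Gamma_t^1,U_t,K_{Z_t})$ with $Z_t\in G(0,K_{Z_t})$, the constraint set collapses to ${\cal P}_n^{\widehat{\hat{X}}}(\kappa)$. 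The hard part is the material around~\eqref{kf_m_1}--\eqref{kf_m_4_a}: because $\widehat{\hat{X}}_t$ is a Kalman filter whose driving ``observation noise'' $\hat{I}_t$ is itself the innovation of the first-level filter, one must argue carefully---by joint Gaussianity and induction---that $K_t$ stays nonrandom, that $I^n$ is genuinely white, and that passing from $(A^{t-1},Y^{t-1})$-conditioning to $Y^{t-1}$-conditioning loses no information relevant to $H(Y^n)$, which is exactly what legitimizes the equivalent characterization for the capacity optimization.
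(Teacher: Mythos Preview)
Your proposal is correct and follows essentially the same route as the paper's own proof. The paper's argument for (a) is the terse ``by definition of $I_t$ with $Y^t$ generated by (\ref{cp_16_alt_n_P}) we obtain (\ref{cp_16_alt_inn}); the rest of (i) and (ii) follow by repeating the derivation of Kalman-filter equations,'' and for (b) it invokes $I(A_t,\hat{X}_t;Y_t|Y^{t-1})=H(Y_t|Y^{t-1})-H(\hat{I}_t)$ together with the reconditioning ${\bf E}\{\hat{\gamma}_t(A_t,\hat{X}_t)\}={\bf E}\{{\bf E}\{\hat{\gamma}_t(A_t,\hat{X}_t)|Y^{t-1}\}\}$; you carry out precisely these steps in detail, including the explicit cross-covariance computations that produce $F_t^{CL}$ and the DRE for $K_t$.
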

\vspace*{-0.cm}
\begin{proof}
(a) By definition of $I_t$ with $Y^t$ generated by  (\ref{cp_16_alt_n_P}), we obtain (\ref{cp_16_alt_inn}). The rest of equations under (i) and (ii) follow by repeating the derivation of Kalman-filter equations.
 (b) (\ref{ftfic_is_a}) follows from    $ I(A_t, \hat{X}_t; Y_t|Y^{t-1})=H(Y_t|Y^{t-1})- H(\hat{I}_t)$ from (\ref{entr_noise}),  and $H(Y_t|Y^{t-1})$ is computed from (\ref{inno_PO}). 
  ${\cal P}_{n}^{\widehat{\hat{X}}}(\kappa)$ follows from  Lemma~\ref{thm_SS_P}, (\ref{PO_tr_3_P_a}) by  ${\bf E}\{ \hat{\gamma}_t(A_t, \hat{X}_t)\} ={\bf E}\{{\bf E}\{ \hat{\gamma}_t(A_t, \hat{X}_t) \big|Y^{t-1}\}    \}$.   
\end{proof}

\vspace*{-0.3cm}
{ \it \#4. Separation Principle.} Now, 
we show the statement of   Section~\ref{sub-section:A.2}, under  2.3):    the strategy $U_t(Y^{t-1})=\Gamma_t^2 Y^{t-1}, $ is determined independently of  $(\Gamma_t^1\hat{X}_t, K_{Z_t}), \forall t \in {\mathbb Z}_+^n$.

\begin{theorem}(Decentralized separation principle)\\
\label{gen_exa}
Consider  $C_{FB,n}(\kappa)$  of Theorem~\ref{thm_SS}.(b). Define  the Cost-Rate, i.e.,  dual of $C_{FB,n}(\kappa)$, by 
 \begin{align}
&\kappa_{n}(C)\tri   \inf_{\{ \big(\Gamma_t^1, U_{t}, K_{Z_t}\succeq 0\big)\}_{t=1}^n  }{\bf E}\Big\{\sum_{t=1}^n \widehat{\hat{\gamma}}_t(U_t,\widehat{\hat{X}}_t, \Gamma_{t}^1,K_{Z_t})  \Big\} \nonumber \\
&\mbox{such that (\ref{kf_m_1}) holds, } \hso  \frac{1}{2} \sum_{t=1}^n    \log \frac{ \det( K_{I_t})   }{\det(K_{\hat{I}_t})}   \geq nC. \label{cap_fb_1_TC_1n_NN}
\end{align} 
{\it (a) Decentralized Separation  Principle.} \\
(i) 
The optimal strategy $\{U_t^{*}(\cdot,\Gamma^1, K_Z)| t \in {\mathbb Z}_+^n\}$ is  a solution of the stochastic  optimal control  problem 
\begin{align}
J_{n}^{SC}(U^*, \Gamma^1, K_Z) \tri \inf_{U(\cdot)}  {\bf E}\Big\{\sum_{t=1}^n  \widehat{\hat{\gamma}}_t(U_t,\widehat{\hat{X}}_t, \Gamma_{t}^1,K_{Z_t})  \Big\}  \label{sc_pr}
\end{align}
and it is given by the following equations. 
\begin{align}
&U_t^{*}(Y^{t-1},  \Gamma^1, K_{Z})={\Gamma}_{t}^{2,*}  \widehat{\hat{X}}_t, \hso \forall t\in {\mathbb Z}_+^{n-1}, \label{opt_con_1_1_n_a}\\
&{\Gamma}_{t}^{2,*} 
=-\Big(R_{t}+  B_{t}^T P_{t+1}B_t \Big)^{-1}\Big(  \big({L}_{t}(\Gamma_t^1)\big)^T 
\nonumber \\
&+ B_{t}^T P_{t+1} {F}_{t}(\Gamma_t^1)\Big), \hso  
 {\Gamma}_{n}^{2,*}=-R_n^{-1} \big({L}_{n}(\Gamma_n^1)\big)^T\label{opt_con_1_1_n}
\end{align}
where $P_t\succeq 0$  satisfies   the backward  matrix DRE, 
\begin{align}
&P_t = \big({F}_t(\Gamma_t^1)\big)^T  P_{t+1}{F}_t(\Gamma_t^1)  - \Big(\big({F}_t(\Gamma_t^1)\big)^T  P_{t+1} B_t  +{L}_{t}(\Gamma_t^1) \Big) \nonumber \\
& .\Big( R_t+ B_t^T P_{t+1} B_t \Big)^{-1} \Big( \big(F_t(\Gamma_t^1)\big)^T P_{t+1}B_t  \nonumber \\
&+ L_t(\Gamma_t^1) \Big)^T  +  Q_t(\Gamma_t^1),\: \forall t \in {\mathbb Z}_+^{n-1}, \hso P_n= Q_n(\Gamma_n^1).  \label{ric2}
\end{align}
The optimal cost-rate is given by
\begin{align}
&J_{n}^{SC}(U^*, \Gamma^1, K_Z)= 
 \langle \mu_{X_1} , P_1 \mu_{X_1} \rangle  + \sum_{t=1}^n \Big\{ trace\Big({F}_{t}^{CL} K_{I_t}({F}_{t}^{CL})^T P_{t}\Big)   \nonumber \\
& \hst  +   trace \Big(Q_t \Sigma_t+ Q_t (\Gamma_t^1)K_t + R_t K_{Z_t}\Big)       \Big\} .
\end{align}
(ii) The optimal   $\{(\Gamma_{t}^{1,*}, K_{Z_t}^*)| t\in {\mathbb Z}_+^n\}$  is determined from  (\ref{ftfic_is}), with ${\cal P}_{n}^{\widehat{\hat{X}}}(\kappa)$ replaced by $\frac{1}{n}J_{n}^{SC}(U^*, \Gamma^1, K_Z) \leq \kappa $, i.e., $U(\cdot)=U^{*}(\cdot)$.
 \end{theorem}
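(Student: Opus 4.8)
\emph{The plan.} The proof turns on a structural decoupling in the representation of Theorem~\ref{thm_SS}. The rate functional $\frac{1}{2}\sum_{t=1}^n\log\frac{\det(K_{I_t})}{\det(K_{\hat{I}_t})}$ appearing in (\ref{cap_fb_1_TC_1n_NN}) depends on the strategy triple $(\Gamma^1,U,K_Z)$ \emph{only through} $(\Gamma^1,K_Z)$: by (\ref{cov_in_noise}), $K_{\hat{I}_t}$ is a function of the filtering DRE $\Sigma_t$ of (\ref{dre_1}), which involves only the system data; and by (\ref{inno_PO}), $K_{I_t}$ is a function of $(\Sigma_t,K_t,\Gamma_t^1,K_{Z_t})$, where $K_t$ solves the DRE (\ref{kf_m_4_a}) in which $U$ does not appear. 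Hence, for each admissible $(\Gamma^1,K_Z)$, the only effect of the control $U=\{U_t(Y^{t-1})\}$ is on the average cost ${\bf E}\{\sum_t\widehat{\hat{\gamma}}_t(U_t,\widehat{\hat{X}}_t,\Gamma_t^1,K_{Z_t})\}$ of (\ref{PO_tr_3}). The plan is therefore: (a) perform the inner minimization over $U$ for fixed $(\Gamma^1,K_Z)$ and identify it with the stochastic LQ problem $J_n^{SC}(U^*,\Gamma^1,K_Z)$ of (\ref{sc_pr}), exhibiting the stated solution; and (b) conclude that in $\kappa_n(C)$ (and, dually, in $C_{FB,n}(\kappa)$) the residual optimization is over $(\Gamma^1,K_Z)$ alone with $U=U^*$ substituted in, which is part~(ii).

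\emph{Reduction to an LQ regulator and dynamic programming.} Fix an admissible $(\Gamma^1,K_Z)$. By Lemma~\ref{thm_SS_P} and Theorem~\ref{thm_SS}(a), for any $Y^{t-1}$-measurable $U_t$ the process $\widehat{\hat{X}}_t$ follows the linear recursion (\ref{kf_m_1}), $\widehat{\hat{X}}_1=\mu_{X_1}$, driven by the orthogonal innovations $I_t\in G(0,K_{I_t})$ with $K_{I_t}$ as in (\ref{inno_PO}) and $I_t$ independent of $(\widehat{\hat{X}}^t,U^t)$, while by (\ref{PO_tr_3}) the average cost is ${\bf E}\{\sum_{t=1}^n[\langle U_t,R_tU_t\rangle+\langle\widehat{\hat{X}}_t,Q_t(\Gamma_t^1)\widehat{\hat{X}}_t\rangle+\langle\widehat{\hat{X}}_t,L_t(\Gamma_t^1)U_t\rangle+\langle L_t(\Gamma_t^1)U_t,\widehat{\hat{X}}_t\rangle]\}$ plus the $U$-independent term $\sum_t trace(Q_t\Sigma_t+Q_t(\Gamma_t^1)K_t+R_tK_{Z_t})$. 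With $Q_t(\Gamma_t^1)\succeq0$, $R_t\succ0$, and the joint state-control weight positive semidefinite (the $U$-quadratic part equals $(\Gamma_t^1\widehat{\hat{X}}_t+U_t)^TR_t(\Gamma_t^1\widehat{\hat{X}}_t+U_t)+\langle\widehat{\hat{X}}_t,Q_t\widehat{\hat{X}}_t\rangle$), this is a well-posed finite-horizon stochastic LQ regulator with cross term, which I solve by backward dynamic programming / completion of squares as in \cite{caines1988}: positing $V_t(\widehat{\hat{X}}_t)=\langle\widehat{\hat{X}}_t,P_t\widehat{\hat{X}}_t\rangle+\pi_t$ for the cost-to-go, the Bellman step over $U_t$ is a strictly convex quadratic (invertibility of $R_t+B_t^TP_{t+1}B_t$ from $R_t\succ0$, $P_{t+1}\succeq0$) whose minimizer is the linear feedback $U_t^*=\Gamma_t^{2,*}\widehat{\hat{X}}_t$ of (\ref{opt_con_1_1_n}) and which forces $P_t\succeq0$ to satisfy the backward Riccati recursion (\ref{ric2}); collecting the constant parts of the Bellman step gives the claimed closed form for $J_n^{SC}(U^*,\Gamma^1,K_Z)$, with $\langle\mu_{X_1},P_1\mu_{X_1}\rangle$ coming from the initial condition and the trace sums from the innovations noise $F_t^{CL}I_t$ and the $U$-independent stage constants. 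Since $U_t^*$ is linear in $\widehat{\hat{X}}_t$ and $\widehat{\hat{X}}_t$ is a linear functional of $Y^{t-1}$, $U_t^*$ lies in the admissible class $U_t=\Gamma_t^2Y^{t-1}$, so the infimum over all feedback laws is attained within it and $\widehat{\hat{X}}_t$ is a sufficient statistic for the control sub-problem.

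\emph{Separation and conclusion.} For every admissible $(\Gamma^1,K_Z)$ and every admissible $U$, ${\bf E}\{\sum_t\widehat{\hat{\gamma}}_t(U_t,\widehat{\hat{X}}_t,\Gamma_t^1,K_{Z_t})\}\ge J_n^{SC}(U^*,\Gamma^1,K_Z)$, with equality at $U=U^*$, whereas the rate functional is unchanged. Therefore in (\ref{cap_fb_1_TC_1n_NN}) the inner infimum over $U$ equals $J_n^{SC}(U^*,\Gamma^1,K_Z)$, which is part~(i); dually, in $C_{FB,n}(\kappa)$ of (\ref{ftfic_is}) the feasible set $\{(\Gamma^1,U,K_Z):\frac{1}{n}{\bf E}\{\sum_t\widehat{\hat{\gamma}}_t\}\le\kappa\}$ has the same projection onto the $(\Gamma^1,K_Z)$-coordinates as $\{(\Gamma^1,K_Z):\frac{1}{n}J_n^{SC}(U^*,\Gamma^1,K_Z)\le\kappa\}$, so the maximization of the rate is over $(\Gamma^1,K_Z)$ with $U=U^*$, which is part~(ii); the cost-rate version in $\kappa_n(C)$ is symmetric.

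\emph{Main obstacle.} The delicate point is not the LQ algebra but justifying that the innovations representation (\ref{kf_m_1})--(\ref{inno_PO}) — in particular $I_t$ orthogonal to, hence in the Gaussian case independent of, $(\widehat{\hat{X}}^t,U^t)$, and the $U$-independence of $K_{I_t}$ — persists when $U$ ranges over \emph{all} $Y^{t-1}$-measurable strategies, not just the linear ones used to derive that representation, so that the reduction to an LQ regulator is legitimate over the whole admissible class. This is handled exactly as in the classical LQG separation argument: a $Y^{t-1}$-measurable $U_t$ enters the $\hat{X}_t$- and $\widehat{\hat{X}}_t$-recursions purely additively and is known at time $t$, so it drops out of the conditional-covariance and innovations computations, leaving the orthogonality of $I_t=Y_t-{\bf E}\{Y_t\mid Y^{t-1}\}$ to $Y^{t-1}$ intact.
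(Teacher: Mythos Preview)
Your proposal is correct and follows essentially the same route as the paper: the key observation that the rate terms $I(A_t,\hat{X}_t;Y_t|Y^{t-1})$ in (\ref{ftfic_is_a}) are independent of $U_t(\cdot)$, so that for fixed $(\Gamma^1,K_Z)$ the inner minimization over $U$ decouples into a standard finite-horizon LQG regulator with cross term, whose solution is the Riccati feedback (\ref{opt_con_1_1_n_a})--(\ref{ric2}). The paper's own proof is considerably more terse---it simply invokes ``person-by-person optimality'' for the decoupling and cites \cite{caines1988} for the LQ solution---whereas you spell out the dynamic-programming/completion-of-squares argument and, usefully, flag and resolve the point that the $U$-independence of $K_{I_t}$ and the orthogonality of $I_t$ must hold over the full admissible class of $Y^{t-1}$-measurable controls, not just linear ones.
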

 \vspace*{-0.3cm}
\begin{proof} See  Section~\ref{app:LQG-DM}. 
\end{proof} 

\vspace*{-0.2cm}

\begin{remark} 
\label{rem:rel}
The characterizations of  $C_{FB,n}(\kappa)$ in  \cite{charalambous2020new,CDC-CK-SL:ITW2021} correspond to $B_t=0, Q_t=0, R_t=I, \forall t$. Using these values in   $U^*(\cdot)$ of   (\ref{opt_con_1_1_n_a}), (\ref{opt_con_1_1_n}), and $C_{FB, n}(\kappa)$ of Theorem~\ref{thm_SS},   the optimal input reduces to $A_t=  \Gamma_{t}^1\big(\hat{X}_{t}- \widehat{\hat{X}}_{t}\big) + Z_t, \forall t \in {\mathbb Z}_+^n$, with  corresponding  $C_{FB, n}(\kappa)$, which   coincides with the 
expression in \cite{charalambous2020new,CDC-CK-SL:ITW2021}. \\
Reference  \cite{charalambous2020new} discusses relations to the  expressions of $C_{FB, n}(\kappa)$  derived in  \cite{yang-kavcic-tatikonda2007,kim2006,kim2010,gattami2019}.
\end{remark} 

\label{sect_POSS}

\section{Asymptotic Limit}
 In this section we consider the limit $C_{FB}(\kappa) =\lim_{n \longleftrightarrow \infty}\frac{1}{n} C_{FB,n}(\kappa)$, based on Assumptions~\ref{ass:ati}.
 
 \begin{assumptions} Asymptotically time-invariant  \\
 \label{ass:ati}
 (1) The  LQG-POSS  (\ref{LQG-1})-(\ref{LQG-7}) is asymptotically time-invariant (ATI), $\lim_{n \longrightarrow \infty} (F_n, B_n, C_n, D_n, N_n, K_{W_n}, R_n, Q_n)=(F, B, C, D, N, K_{W}, R, Q) $. \\
(2) The  
  controller-encoder strategies are ATI,  $\lim_{n \longrightarrow \infty} (\Gamma_n^1, K_{Z_n})= (\Gamma^1,  K_{Z}), K_{Z} \succeq 0$.\\
  (3) The solutions to the matrix DREs $(\Sigma_n, K_n,  P_n)$ are such that $\lim_{n \longrightarrow \infty}(\Sigma_n, K_n,  P_n)=(\Sigma, K,  P)$, where $\Sigma\succeq 0, K \succeq 0, P \succeq 0$ are unique stabilizing solutions of corresponding matrix algebraic Riccati equations (AREs),
  \begin{align}
\Sigma=& F \Sigma F^T   -\Big(F  \Sigma C^T+GK_{W}N^T  \Big)  \Big(N K_{W} N^T+C  \Sigma C^T\Big)^{-1} \nonumber \\
&\Big( F  \Sigma C^T+ G K_{W}N^T  \Big)^T + G K_{W} G^T,\hso \Sigma \succeq 0,  \label{are_1}\\
&\mbox{similarly for $(K, P)$.} 
\end{align}
 \end{assumptions}
 
 \begin{theorem} Asymptotic char.  of  ${C}_{FB}(\kappa)$ for LQG-POSSs   \\ 
\label{thm_SS_as}
Consider  the LQG-POSS  (\ref{LQG-1})-(\ref{LQG-7}), and suppose Assumptions~\ref{ass:ati} hold. 
The CC capacity $C_{FB}(\kappa)$ is given by  
\begin{align}
&C_{FB}(\kappa) =\lim_{n \longleftrightarrow \infty}\frac{1}{n} C_{FB,n}(\kappa)= \sup_{{\cal P}_{\infty}^{\widehat{\hat{X}}}(\kappa)} \frac{1}{2}  \log \frac{ \det( K_{I})   }{\det(K_{\hat{I}})} \\
&= \sup_{{\cal P}_{\infty}^{\widehat{\hat{X}}}(\kappa)}\frac{1}{2} \log \Big\{ \frac{ \det \Big( C( \Gamma^1)K (C (\Gamma^1)^T + K_{\hat{I}} + DK_{Z}D^T \Big)    }{\det\big(K_{\hat{I}}\big)}\Big\},   \nonumber \\
&{\cal P}_{\infty}^{\widehat{\hat{X}}}\tri \Big\{ (\Gamma^1, K_Z)\big| \: K_{Z} \succeq 0, \hso   trace\Big({F}^{CL} K_{I}({F}^{CL})^T P\Big)   \nonumber \\
& \hst \hst  +   trace \Big(Q \Sigma+ Q (\Gamma^1)K + R K_{Z}\Big)  \leq \kappa    \Big\}, \nonumber\\
&\mbox{s.t. $\Sigma\succeq 0, K \succeq 0, P \succeq 0$ satisfy the matrix AREs}\nonumber 
\end{align}
where the limit is the uniform limit over all initial conditions, i.e., is independent of $ \Sigma_1, P_1, K_1$.  \\ 
Moreover, sufficient conditions for Assumptions~\ref{ass:ati}.(3)  to hold, are  (i) detectability and (ii)  stabilizability of the  AREs. 
\end{theorem}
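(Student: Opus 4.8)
The plan is to reduce the asymptotic statement to the convergence theory of matrix Riccati difference equations (DREs) to the associated algebraic Riccati equations (AREs), applied on top of the finite-horizon characterization of Theorem~\ref{thm_SS}.(b) and the separation structure of Theorem~\ref{gen_exa}. First I would note that, once an ATI pair of strategies $(\Gamma_t^1,K_{Z_t})$ (Assumption~\ref{ass:ati}.(2)) is fixed, each summand in $C_{FB,n}(\kappa)$ — the per-stage rate $I(A_t,\hat{X}_t;Y_t|Y^{t-1})=\frac12\log\frac{\det(K_{I_t})}{\det(K_{\hat{I}_t})}$ of (\ref{ftfic_is_aa}) and the per-stage cost $\widehat{\hat{\gamma}}_t$ of (\ref{PO_tr_3}) — is a fixed continuous function of the triple $(\Sigma_t,K_t,P_t)$ of Riccati iterates and of the converging system/strategy matrices. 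Under Assumption~\ref{ass:ati}.(3) these iterates converge, hence each summand converges to its time-invariant value, and the Ces\`aro (Toeplitz) lemma gives $\frac{1}{n}\sum_{t=1}^n(\cdot)\to(\cdot)_\infty$ for both the objective and the constraint. Since any admissible ATI pair is therefore asymptotically feasible and achieves the corresponding stationary rate, this yields the lower bound $\liminf_{n\to\infty}\frac1n C_{FB,n}(\kappa)\ge \sup_{{\cal P}_\infty^{\widehat{\hat X}}(\kappa)}\frac12\log\frac{\det(K_I)}{\det(K_{\hat I})}$.

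For the matching upper bound I would show that the finite-horizon optimizers are themselves asymptotically stationary. By Theorem~\ref{gen_exa} the inner optimization over $U(\cdot)$ is a standard LQ problem whose gain is driven by the backward control DRE $P_t$; under stabilizability/detectability this DRE converges (backward in time, uniformly in the horizon length) to the stabilizing ARE solution, so $U^*$ is asymptotically the stationary LQ law. The remaining finite-dimensional optimization over $(\Gamma^1,K_Z)$ has objective and constraint that are $\frac1n$-averages of quantities depending on $(\Gamma_t^1,K_{Z_t})$ only through the forward Riccati maps for $\Sigma_t,K_t$; an $\varepsilon$-optimal time-varying sequence can be replaced, outside a vanishing fraction of stages, by a single stationary pair without changing the limiting averages, giving $\limsup_{n\to\infty}\frac1n C_{FB,n}(\kappa)\le \sup_{{\cal P}_\infty^{\widehat{\hat X}}(\kappa)}(\cdot)$. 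Uniformity of the limit in $\Sigma_1,P_1,K_1$ then follows because the Riccati recursions contract toward their stabilizing fixed points, so two runs from different initial data differ by a geometrically decaying amount that does not affect the Ces\`aro average. This turnpike/steady-state step — ruling out that genuinely time-varying strategies beat stationary ones asymptotically — is the part I expect to be the main obstacle, and it is precisely where the initial-condition-independent (uniform) convergence of the Riccati equations, rather than mere pointwise convergence, is used.

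For the ``moreover'' clause I would invoke the classical Riccati convergence theorem: if the pair governing a Riccati recursion is stabilizable and the dual output pair is detectable, the DRE converges from every positive semidefinite initial condition to the unique positive semidefinite stabilizing solution of the ARE, with exponentially stable closed loop. Applying this separately to the filtering DRE for $\Sigma_t$ (with $(F,GK_W^{1/2})$ stabilizable and $(C,F)$ detectable), to the DRE for $K_t$ with the closed-loop matrices $F_t(\Gamma^1),C_t(\Gamma^1)$, and to the control DRE for $P_t$ with $(F(\Gamma^1),B)$ and $(Q(\Gamma^1)^{1/2},F(\Gamma^1))$, establishes Assumption~\ref{ass:ati}.(3), as in \cite[Section~III]{charalambous2020new}.
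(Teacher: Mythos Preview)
Your proposal is correct and follows essentially the same approach as the paper: reduce to the finite-horizon characterization of Theorem~\ref{thm_SS}.(b) together with the separation of Theorem~\ref{gen_exa}, then justify the interchange of the limit $n\to\infty$ and the supremum over strategies via convergence of the Riccati DREs to their AREs (the paper defers this step to \cite{CDC-SL:ITW022}, whereas you spell out the Ces\`aro/turnpike structure). For the ``moreover'' clause the paper argues slightly differently, invoking continuous dependence of DRE solutions on parameters (needed because the system matrices and strategies are only asymptotically time-invariant under Assumptions~\ref{ass:ati}.(1)--(2)), rather than the time-invariant Riccati convergence theorem you cite; you may want to note this extra ingredient.
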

\vspace*{-.3cm}
\begin{proof} Using Theorem~\ref{gen_exa} and Theorem~\ref{thm_SS}.(b) and    Assumptions~\ref{ass:ati}, we repeat  the proof  given in \cite{CDC-SL:ITW022} to show the limit over $n \longrightarrow \infty$ and supremum over the strategies can be interchanged,   the asymptotic limit is a uniform limit over all initial conditions.  The fact that (i) detectability and (ii)  stabilizability of the matrix AREs are sufficient for Assumptions~\ref{ass:ati}.(3) to hold follows from the continuous  dependence of solutions of matrix DREs on the parameters. 
\end{proof} 
 
%

\section{Conclusion}
In this paper we applied the concepts of  information state and sufficient statistic to characterize feedback capacity $C_{FB}$ of nonlinear partially observable stochastic systems,  with input  dependent   states, which are not available to the encoder and the decoder. In addition, we applied these concepts to linear-quadratic-Gaussian  partially observable stochastic systems (LQG-POSS). For LQG-POSS,  $C_{FB}$ is expressed in terms of 2 filtering Riccati equations and 1 control Riccati equation.  For certain  special cases we recover  recent expressions of $C_{FB}$ that appeared in \cite{charalambous2020new,CDC-CK-SL:ITW2021}, for Gaussian channels with memory, without state dependent inputs. 


\section{Appendix}

\subsection{Definition of Information State and Sufficient Statistic}
\label{app:is-suf}

\begin{definition} (Information state and sufficient statistic)\\ 
\label{def:is}
(a) The a posteriori distribution $\{{\bf P}_{t}^P(dx_t|a^{t-1}, y^{t-1})| t\in {\mathbb Z}_+^n\}$  is  called an information state if  the next state ${\bf P}_{t+1}^P(dx_{t+1}|a^{t}, y^{t})$ is determined from $(y_t, a_t)$,  the current state ${\bf P}_{t}^P(dx_t|a^{t-1}, y^{t-1})$, and possibly  ${ P}_{t}(da_t|a^{t-1}, y^{t-1})$,  $\forall t\in {\mathbb Z}_+^n$.\\
(b) A  statistic $\{\delta_t|t \in {\mathbb Z}_+^n\}, \delta_t : \Omega \rar {\mathbb D}$ is called a sufficient statistic for the strategies $P(\cdot|\cdot) \in {\cal P}_{n}(\kappa)$ if $P_i(da_t|a^{t-1}, y^{t-1})$ is  induced by $A_t=\mu_i(\delta_t)$ for some measurable function $\mu_t(\cdot)$, $\forall t \in {\mathbb Z}_+^n$. \\
(c)  The strategies $P(\cdot|\cdot) \in {\cal P}_{n}(\kappa)$ are called  separated  strategies  if they are generated by  $A_t={\mu}_t^z(\xi_t(A^{t-1}, Y^{t-1}), Z_t)$, where $\xi_t(a^{t-1}, y^{t-1})\tri {\bf P}_{t}^P(dx_t|a^{t-1}, y^{t-1})$, $\mu_t^z(\cdot)$ is a measurable function, and   $Z_t: \Omega \rar {\mathbb  Z}$ is a RV (responsible for the randomization),  
 $\forall t\in {\mathbb Z}_+^n$ (i.e., $\delta_t=(\xi_t(A^{t-1}, Y^{t-1}), Z_t)$ is a sufficient statistic).  The set of separated strategies is denoted by ${\cal P}_{n}^{sep}(\kappa)$. 
\end{definition}

\subsection{LQG-POSS}
\label{app:LQG-DM}

{\bf Proof of Lemma~III.3}
\label{app:thm_SS_P}
We show (III.47),   (III.48) last. First, (III.44) is a re-statement of the first equality   in (III.33) in  Lemma~III.1.(a).(i).  The conditional distributions (III.45), (III.46) follow directly from  (III.44), and the orthogonality of $\hat{I}_t$ and $(A^{t},Y^{t-1})$, as follows.
\begin{align*}
&{\mb P}\big\{Y_t \in dy \Big| Y^{t-1}, X^t\big\} ={\bf P}_t(dy |Y^{t-1}, A^t),   \\
=&{\bf P}_t(dy |A^t,Y^{t-1},\hat{X}^{t}), \hso  \mbox{by $\hat{X}_t= {\bf E}\big\{X_t\Big| Y^{t-1}, A^{t-1}\big\}$}\\
=&{\bf P}_t(dy |A_t,\hat{X}_{t}), \hso  \mbox{by (III.44)}. 
\end{align*}
The equivalent representation of  $\hat{\gamma}_t(A_t, \hat{X}_t)$ in the   average cost (III.51) is obtain by reconditioning,    ${\bf E}\big\{ \sum_{t=1}^n {\gamma}_t(A_t, {X}_t)    \big\}=\sum_{t=1}^n {\bf E}\big\{ {\bf E}\big\{  {\gamma}_t(A_t,{X}_t)\big|A^t, Y^{t-1} \big\}   \big\}$,  using  the quadratic definition of $\gamma_i(\cdot)$ of (I.16),  Lemma~III.1, and the Markov chain $X_t \leftrightarrow (A^{t-1}, Y^{t-1}) \leftrightarrow A_t$. This shows the new channel (III.44) is linear in  $(\hat{X}_t, A_t)$ and cost is quadratic in $(\hat{X}_t, A_t)$. Since ${\bf P}_{\hat{X}_t|\hat{X}^{t-1}, A^{t-1}}={\bf P}_{\hat{X}_t|\hat{X}_{t-1}, A_{t-1}}$, and $I^P(A^n \rar Y^n)
=  {\bf E}^P\big\{ 
\log\big(\frac{{\bf P}_t(dY_t|Y^{t-1}, A^t)}
{{\bf P}_t(dY_t|Y^{t-1})}\big) \big\}={\bf E}\big\{ 
\log\big(\frac{{\bf P}_t(dY_t|A_t, \hat{X}_t)}
{{\bf P}_t(dY_t|Y^{t-1})}\big) \big\}$,  by Markov decision theory the optimization over ${\cal P}_{n}(\kappa)$ of $I(A^n\rar Y^n)$ occurs in the set ${\cal P}_{n}^{\hat{X}}(\kappa)$, i,.e.,  (III.47).  holds. By (III.47),  ${\bf P}_t(da_t|\hat{x}_t,y^{t-1})$ is induced     by the  realization (III.48). Hence, we obtain 
 (III.50) we use the distributions  (III.45)-(III.47).  

{\bf Proof of Theorem~III.2.}
\label{app_gen_exa}
(a) In $C_{FB,n}(\kappa) $ defined by  (III.62),   the terms  $I(A_t, \hat{X}_t; Y_t|Y^{t-1})$ given by  (III.63)     do not depend on the strategy $U_t(\cdot), \forall t \in {\mathbb Z}_+^n$, hence we can apply the person-by-person optimality concepts stated under (i) and (ii). (b) We express  $\widehat{\hat{\gamma}}_t(U_t,\widehat{\hat{X}}_t, \Gamma_t^1, K_{Z_t})\big)$ defined by (III.65) as follows.  
\begin{align}
\widehat{\hat{\gamma}}_t(U_t,\widehat{\hat{X}}_t, \Gamma_t^1, K_{Z_t})=& \left[ \begin{array}{c} \widehat{\hat{X}}_t \\ U_t \end{array} \right]^T \left[\begin{array}{cc} Q_t(\Gamma_t^1) & L_t(\Gamma_t^1) \\   \big(L_t(\Gamma_t^1)\big)^T & R_t\end{array} \right] \left[ \begin{array} {c} \widehat{\hat{X}}_t \\ U_t\end{array} \right] \nonumber  \\
&+ \mbox{additional terms}.
\label{PO_tr_3_a}  
\end{align}
Then we have a linear-quadratic Gaussian stochastic optimal control problem, and its solution is the one stated under (i) \cite{caines1988}.  Since we determined $U^*(\cdot)$ it remains to solve $C_{FB,n}(\kappa)$ as stated under (ii).

\bibliographystyle{IEEEtran}

\bibliography{Bibliography_capacity}

\newpage

\end{document}